\newtheorem{theorem}{Theorem}[section]
\newtheorem{lemma}[theorem]{Lemma}
\theoremstyle{definition}
\newtheorem{example}[theorem]{Example}
\newtheorem{proposition}[theorem]{Proposition}
\theoremstyle{remark}
\newtheorem{remark}[theorem]{Remark}
\numberwithin{equation}{section}
\begin{document}

\title{Koba-Nielsen local zeta functions, convex subsets, and  generalized Selberg-Mehta-Macdonald and Dotsenko-Fateev-like integrals}


\author {Willem Veys\footnote{KU Leuven, Department of Mathematics, Celestijnenlaan 200 B, B-3001 Leuven,

Belgium. E-mail: wim.veys@kuleuven.be.

The author was supported by KU Leuven Grant GYN-E4282-C16/23/010.} \ and W. A. Z\'{u}\~{n}iga-Galindo\footnote{University of Texas Rio Grande Valley, School of Mathematical and Statistical Sciences, 

One West University Blvd,  Brownsville, TX 78520, United States. 

E-mail: wilson.zunigagalindo@utrgv.edu.

The author was partially funded by L. Debnath Endowment}}
\date{}
\maketitle








\begin{abstract}
    The Koba-Nielsen local zeta functions are integrals depending on several complex parameters, used to regularize the Koba-Nielsen string amplitudes. These integrals are convergent and admit meromorphic continuations in the complex parameters.  In the original case, the integration is carried out on the $n$-dimensional Euclidean space. In this work, the integration is over a variety of (bounded or unbounded) convex subsets; the resulting integrals also admit meromorphic continuations in the complex parameters. We describe the meromorphic continuation's polar locus explicitly, using the technique of embedded resolution.  This result can be reinterpreted as saying that the meromorphic continuations are weighted sums of Gamma functions, evaluated at linear combinations of the complex parameters, where the weights are holomorphic functions. The integrals announced in the title of this paper occur as a particular case of these new Koba-Nielsen local zeta functions, or of a further generalization to arbitrary hyperplane arrangements.
\end{abstract}


\section{Introduction} 

The Selberg-Mehta-Macdonald and Dotsenko-Fateev-like integrals play a central
role in several areas in mathematics and physics, for instance, in random
matrix theory, multivariable orthogonal polynomial theory,
Calogero--Sutherland quantum many-body systems, Knizhnik--Zamolodchikov
equations, among other areas; see e.g. \cite{Forrester et al}-\cite{Sutherland 2}.

In this paper, we give a unified approach to a large class of these
integrals, showing that they are particular cases of {\em  generalized
Koba-Nielsen local zeta functions} $Z_{\varphi }^{(N)}\left( D;\boldsymbol{s}
\right) $.\ These integrals are defined as 
\begin{equation}
Z_{\varphi }^{(N)}\left( D;\boldsymbol{s}\right) :={\int\limits_{D}}\varphi
(x){\prod\limits_{i=1}^{N}}\left\vert x_{i}\right\vert ^{s_{0i}}{
\prod\limits_{i=1}^{N}}\left\vert 1-x_{i}\right\vert ^{s_{i(N+1)}}\text{ }{
\prod\limits_{1\leq i<j\leq N}}\left\vert x_{i}-x_{j}\right\vert ^{s_{ij}}dx,
\label{zeta_function_string}
\end{equation}
where $N\geq 1$, 
$s_{0i}$ and $s_{i(N+1)}$ for $1\leq i\leq N$ and $s_{ij}$ for $1\leq i<j\leq N$ are complex variables, that is, 
$\boldsymbol{s}:=\left( s_{ij}\right) \in \mathbb{C}^{\boldsymbol{d}}$ with $\boldsymbol{d}=\frac{N(N+3)}{2}$, 
and the function $\varphi :{\mathbb{R}}^{N}\rightarrow {
\mathbb{C}}$ is smooth on the (closed) integration domain $D$. The
integration domain $D$ is \emph{any} polyhedron with boundary conditions
given by inequalities of the form $x_{i}\geq 0$, $x_{i}\leq 0$, $x_{i}\geq 1 
$ or $x_{i}\leq 1$ for some indices $i$ and/or $x_{i}\geq x_{j}$ for some $
i\neq j$, such that $\dim (D)=N$. Here we include the case `no conditions',
being $D={\mathbb{R}}^{N}$.

\smallskip
In the case $D={\mathbb{R}}^{N}$ and $\varphi (x)\equiv 1$, the Koba-Nielsen
local zeta function $Z_{\varphi }^{(N)}\left( D;\boldsymbol{s}\right) $ was
studied in \cite{BVZ-JHP}, see also \cite{BGZ-LMP}-\cite{Symmetry}. In that work, it was established that these
integrals are convergent and  holomorphic in a nonempty open subset of $\mathbb{C}^{\boldsymbol{d}}$. Furthermore, they admit meromorphic
continuations to $\mathbb{C}^{\boldsymbol{d}}$, having a polar locus
consisting of a finite union of hyperplanes in $\mathbb{C}^{\boldsymbol{d}}$,
see \cite[Theorem 4.1, Proposition 5.2]{BVZ-JHP}.  These integrals were used as regularizations of Koba-Nielsen string amplitudes. In  \cite{BVZ-JHP}, their study was based on Hironaka's desingularization theorem \cite{H}, and techniques of multivariate local zeta functions, see, e.g., \cite{Igusa-old}-\cite{Loeser}.

In Section \ref{resolution}, we review Hironaka's desingularization
theorem. It is roughly some kind of change of variables procedure,
transforming the integrand via a map $\pi $ into a function that is
essentially monomial (in local coordinates). Globally, the total inverse
image of 
\begin{equation}\label{A_N}
A_{N}(x)=\left\{ x\in \mathbb{R}^{N}\mid {\prod\limits_{i=1}^{N}}x_{i}
\text{ }{\prod\limits_{i=1}^{N}}\left( 1-x_{i}\right) \text{ }{
\prod\limits_{1\leq i<j\leq N}}\left( x_{i}-x_{j}\right) =0\right\} 
\end{equation}
by $\pi $ is a union of nonsingular $(N-1)$-dimensional manifolds $
E_{i},i\in T,$ that intersect each other transversally.

In \cite{BVZ-JHP}, we showed that, when $D={\mathbb{R}}^{N}$ and $\varphi
(x)\equiv 1$, the polar locus of $Z_{\varphi }^{(N)}\left( D;\boldsymbol{s}
\right) $ is included in a set of hyperplanes in ${\mathbb{C}}^{\boldsymbol{d
}}$, induced by these $E_{i}$, and we determined precisely these possible polar
hyperplanes.  In fact, thanks to some kind of `universality'\ of the
embedded resolution construction, this implies that, \emph{for any polyhedron $D$ as
above}, the polar locus of $Z_{\varphi }^{(N)}\left( D;\boldsymbol{s}\right) 
$ is contained in this same list of hyperplanes. The main result of the
present paper is to determine which $E_{i}$ effectively contribute, see Theorem \ref{maintheorem} and Proposition \ref{independence}.

\smallskip
Our general theory covers for instance the setting of Sussman \cite{Sussman}, where the existence of meromorphic continuations
for several types of Dotsenko-Fateev-like integrals is established. These integrals have
the form $Z_{\varphi }^{(N)}\left( D;\boldsymbol{s}\right) $, where $D$ is
the standard $N$-simplex 
\begin{equation*}
\Delta _{N}=\{(x_{1},\dots ,x_{n})\in {\mathbb{R}}^{N}\mid 0\leq x_{1}\leq
x_{2}\leq \dots \leq x_{N}\leq 1\},
\end{equation*}
or 
\begin{equation*}
\square _{N}=\{(x_{1},\dots ,x_{n})\in {\mathbb{R}}^{N}\mid 0\leq x_{i}\leq 1
\text{ for }i=1,\dots ,N\}.
\end{equation*}
Our main results allow us to recover in a conceptual way Sussman's results, see Section \ref{Sussman-Section}.

Many of the classical results about Selberg, Mehta, and Macdonald integrals
provide explicit formulas for meromorphic continuations for some integrals of type $Z_{\varphi }^{(N)}\left( D;\boldsymbol{s}\right) $, in terms of Gamma
functions evaluated at linear combinations of the variables $s_{ij}$. For 
general smooth functions $\varphi $, and \ domains like the ones considered here, such explicit formulas are impossible. However, the meromorphic continuation
of $Z_{\varphi }^{(N)}\left( D;\boldsymbol{s}\right) $ is a linear
combination (the coefficients are holomorphic functions in the variables $s_{ij}$) of  Gamma functions evaluated at linear combinations of the
variables $s_{ij}$. For general functions $\varphi $, the computation of the
holomorphic coefficients is a difficult task, but the description of the
polar locus can be obtained algorithmically from a suitable desingularization
of a hypersurface like $A_{N}(x)$.

\smallskip
We now briefly situate this work within the general theory of local
zeta functions. Let $\mathbb{K}$ be a local field of characteristic zero,
for instance $\mathbb{R}$, $\mathbb{C}$ or $\mathbb{Q}_{p}$, the field of $p$-adic numbers. Set $\boldsymbol{f}:=\left( f_{1},\ldots ,f_{m}\right) $ and $\boldsymbol{s}:=\left( s_{1},\ldots ,s_{m}\right) \in \mathbb{C}^{m}$, where
the $f_{i}(x)$ are non-constant polynomials in the variables $x:=(x_{1},\ldots ,x_{n})$ with coefficients in $\mathbb{K}$. The
multivariate local zeta function attached to $(\boldsymbol{f},\Theta )$,
with $\Theta $  a test function (with compact support), is defined as 
\begin{equation*}
Z_{\Theta }\left( \boldsymbol{f},\boldsymbol{s}\right) =\int\limits_{\mathbb{
R}^{n}}\Theta \left( x\right) \prod\limits_{i=1}^{m}\left\vert
f_{i}(x)\right\vert ^{s_{i}}dx,\qquad \text{ with}\operatorname{Re}(s_{i})>0\text{
for all }i\text{,}
\end{equation*}
where $dx$ is the normalized Haar measure on $(\mathbb{K}^{n},+)$. These
integrals admit meromorphic continuations to the whole $\mathbb{C}^{m}$, 
\cite{Igusa-old}-\cite{Igusa}, \cite{Loeser},  \cite{Kashiwara-Takai}-\cite
{Zuniga-Veys2}. 

In the case $\mathbb{K}=\mathbb{R}$, $m=1$, these local zeta functions were
introduced in the 50s by Gel'fand and Shilov. The main motivation was that
the meromorphic continuation of Archimedean local zeta functions implies the
existence of fundamental solutions (i.e., Green functions) for differential
operators with constant coefficients, \cite{Atiyah}, \cite{Ber}. The regularization of Feynman amplitudes in quantum field theory is based on the analytic continuation of distributions attached to complex powers of polynomial functions in the sense of Gel'fand and Shilov \cite{G-S}, see also  \cite{B-G-Gonzalez-Dom}-\cite{Bogner}, among others. A new approach to studying scattering amplitudes, called positive geometries, has emerged, \cite{BEPV}-\cite{Arkani et al}. Some constructions in \cite{BEPV} look similar to those here, but in general terms, the two approaches are different. Finally,  the integrals $Z_{\varphi }^{(N)}\left( D;\boldsymbol{s}\right) $ can be
expanded as finite sums of multivariate local zeta functions.

\smallskip
The paper is organized as follows. In Section \ref{Prelim}, we review
Hironaka's resolution of singularities theorem, and we give some examples of embedded resolution of singularities for hyperplane arrangements that are relevant in our setting. We also
review some results about local zeta functions in general and some previous results about Koba-Nielsen local zeta functions. Section \ref
{Section 4} contains the main results and proofs of this paper, Theorem \ref
{maintheorem} and Proposition \ref{independence}. The proof of this
proposition involves elementary but technical calculations and is postponed
to the end of the section. In Section  \ref{Section 6}, we show how our main
result allows us to recover several main results in the paper by Sussman \cite
{Sussman}; these are, in turn, connected to genus zero open string amplitudes, as in the work of Brown and Dupont \cite{Brown-Dupont}.
In Section \ref{generalizations}, we consider a further
generalization of the Koba-Nielsen local zeta functions to arbitrary
hyperplane arrangements. In fact, our proofs in Section \ref{Section 4} are geometrically conceptual and more widely applicable, yielding Theorem \ref{thm general hyperplanes} and Proposition  \ref{prop general hyperplanes} in this more general context. 
 In the final section, we give for the interested reader a quick review of
several types of integrals, including Selberg-Mehta-Macdonald and
Dotsenko-Fateev-like integrals, and local zeta functions for graphs, 
which serve as motivation for the integrals studied here.

\section{Preliminaries}\label{Prelim}
In this paper, we work exclusively with real analytic manifolds. In
particular, $\mathbb{R}^{N}$ and $(\mathbb{P}_{\mathbb{R}}^{1})^{N}$, with $\mathbb P^1_\mathbb{R}$ the real projective line, are
considered $\mathbb{R}$-analytic manifolds. Then terms like local, manifold, etc. should be understood in the category of the $\mathbb{R}$-analytic manifolds.  Whenever we use {\em dimension}, it is the standard notion for real manifolds (possibly with boundary).  This is compatible with the notion of dimension for linear subspaces of (real) affine or projective spaces. Also, {\em codimension} always means codimension as subset of $\mathbb{R}^N$ or $(\mathbb P^1_\mathbb{R})^N$. 

\subsection{Embedded resolution of singularities}\label{resolution}

We state below Hironaka's embedded resolution theorem for a hypersurface $H$ on a general nonsingular variety $M$ over $\mathbb{R}$. In the core of the paper, $M$ will be either just $\mathbb{R}^N$ or $(\mathbb{P}_{\mathbb{R}}^{1})^{N}$, and $H$ will simply be a union of hyperplanes.
But we prefer to formulate this important theorem in the usual  general setting, in particular to be able to describe the general context of zeta functions in Section \ref{zetafunctions}.
 

\begin{theorem}
[Hironaka, \cite{H}]\label{thresolsing} 
Let $M$ be an $n$-dimensional nonsingular variety over $\mathbb{R}$, and $H$ a hypersurface on $M$. 
There exists an embedded resolution
$\pi:X\rightarrow M$ of $H$, that is,

\noindent(i)
$X$ is an $n$-dimensional nonsingular algebraic variety,
$\pi$ is a proper algebraic morphism, which is an isomorphism
outside of $\pi^{-1}(H)$, and  which can be constructed as a composition of a
finite number of blow-ups along closed nonsingular subvarieties, where all centres of blow-up, and hence also $X$ and $\pi$, are defined over $\mathbb{R}$ (in particular
$X$ and all centres of blow-up can also be considered as $\mathbb{R}$-analytic manifolds);

\noindent(ii) $\pi^{-1}\left(  H\right)  $ is a normal-crossing divisor,
meaning that $\pi^{-1}\left(  H\right)  =\cup_{i\in T}E_{i}$, where the
$E_{i}$\ are closed nonsingular subvarieties of $X$ of codimension one, intersecting transversally. That is,
at every point $b$ of $X$, there exist local coordinates $\left(  y_{1},\ldots,y_{n}\right)  $ on $X$ around $b$ such that, if $E_{1},\ldots,E_{r}$
are the $E_{i}$ containing $b$, we have on some open neighborhood  of $b$
that $E_{i}$ is given by $y_{i}=0$ for $i\in\{1,\ldots,r\}$.
\end{theorem}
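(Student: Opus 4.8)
The plan is not to reprove this celebrated theorem from scratch but to indicate the structure of a modern, algorithmic proof (following Hironaka's original ideas as streamlined by Villamayor, Bierstone--Milman, Encinas--Villamayor, W{\l}odarczyk and Koll\'ar), since it is precisely the canonical/functorial nature of such a proof that yields the refinements in (i) about all centres, $X$ and $\pi$ being defined over $\mathbb{R}$.

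First I would reduce the statement to \emph{principalization of ideals}: given a coherent ideal sheaf $\mathcal{I}$ on a smooth variety $W$ over a field $k$ of characteristic zero, there is a composition of blow-ups along smooth centres $\pi\colon W'\to W$, each centre contained in the maximal-order locus of (the weak transform of) $\mathcal{I}$, such that $\mathcal{I}\mathcal{O}_{W'}$ is the ideal of a normal crossings divisor and $\pi$ is an isomorphism over $W\setminus V(\mathcal{I})$. Applying this to $\mathcal{I}=(f_{1}\cdots f_{m})$ on $W=\mathbb{A}^{n}_{\mathbb{R}}$ and keeping track of the components of the total transform gives an embedded resolution of $H=\cup_{i}f_{i}^{-1}(0)$ in the sense of (ii); the blow-up description and the isomorphism-away-from-$H$ clause of (i) then come for free from the construction. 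Passing from the algebraic to the $\mathbb{R}$-analytic category is harmless, since blowing up a real analytic manifold along a closed real analytic submanifold again produces such a manifold, and the transversality in (ii) is preserved.

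The core of the principalization is an induction on $\dim W$ via \emph{hypersurfaces of maximal contact}. At a point where $\mathcal{I}$ has order $b$, an appropriate derivative of a local generator cuts out a smooth hypersurface $Z$ that contains the top-order locus and is stable under the admissible blow-ups; one then passes to a \emph{coefficient (companion) ideal} on $Z$, whose resolution governs the drop of the order of $\mathcal{I}$. To make the induction terminate, I would attach to each point the \emph{resolution invariant}: a lexicographically ordered string built from the successive orders of the coefficient ideals together with counts of the ``old'' exceptional components acquired along the way. The key lemmas are that this invariant is upper semicontinuous, that its maximum locus is smooth (this is where maximal contact is used), that blowing up that locus does not increase the invariant, and that it cannot remain maximal indefinitely; so after finitely many blow-ups the order drops, and a secondary induction terminates the whole process.

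The main obstacle is the bookkeeping of the exceptional divisors in the inductive step: the hypersurface of maximal contact is not canonical, the descent to $Z$ must be shown independent of that choice up to the relevant equivalence, and the accumulated exceptional components must be kept meeting the new centres and one another in normal crossings at every stage --- this is the delicate part of Hironaka's original argument, where the later ``patching'' and canonicity techniques do the real work. Finally, since the entire construction is canonical (it commutes with smooth morphisms, automorphisms and field extensions), all centres of blow-up, and hence $X$ and $\pi$, are automatically defined over the base field $\mathbb{R}$, which gives the last assertion of (i).
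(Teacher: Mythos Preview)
The paper does not prove this theorem; it is stated with attribution to Hironaka \cite{H} and invoked as a classical result, with a pointer to \cite[Chapter 2]{Igusa} for further discussion in the context of integrals. Your sketch of a modern canonical/functorial proof (via principalization, maximal contact, coefficient ideals, and the resolution invariant) is accurate in outline and is essentially how one would prove it, but it goes well beyond what the paper itself does or needs: in this paper Hironaka's theorem is a black box, and the authors immediately specialize to the concrete hyperplane arrangements $A_N$ and $\bar A_N$, where the embedded resolution is constructed explicitly by blowing up dense edges.
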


\begin{remark}\label{explain}\rm
We warn the reader about the fact that Theorem \ref{thresolsing} is an existence theorem. An embedded resolution is not unique, and, although nowadays various algorithmic versions are known, constructing a resolution  of a given specific hypersurface $H$ is in general very cumbersome. 

On the other hand, for the hyperplane configurations $H$ that we study in the present paper, relatively easy and concrete algorithmic embedded resolutions are known.
Below we recall from \cite{BVZ-JHP} the concrete embedded resolution of the specific hyperplane arrangement given by (\ref{A_N}); it will be a crucial ingredient in the study of Koba-Nielsen zeta functions in Sections \ref{Koba-Nielsen-Section} and \ref{Section 4}.
In Section \ref{generalizations}, we use the more general version of this algorithm in the context of arbitrary hyperplane arrangements, as constructed in \cite{STV}-\cite{Va}. 
\end{remark}

We mention as a side remark that 
Hironaka's resolution theorem is more generally valid over any field of characteristic zero,
in particular over the local fields $\mathbb{R}$, $\mathbb{C}$, the field of
$p$-adic numbers $\mathbb{Q}_{p}$, or a finite extension of $\mathbb{Q}_{p}$.
 For a
discussion on the basic aspects of embedded resolutions in the context of applications to integrals, the reader may consult \cite[Chapter 2]{Igusa}.

\medskip
In the statement of Theorem \ref{thresolsing}, there are two kinds of subvarieties $E_{i},i\in T$. Each blow-up creates an
\textit{exceptional variety} $E_{i}, i\in T_e  \ (\subset T)$; the image by $\pi$ of  any of these
$E_{i}$ has codimension at least two in $\mathbb{R}^{n}$ or $M$. The other $E_{i}, i\in T_s \  (\subset T)$
are the so-called \textit{strict transforms} of the irreducible components of $H$.
More precisely, the strict transform of a component $H_i$ of $H$ is the closure (in $X$) of $\pi^{-1}(H \setminus \cup_{i\in T_e} \pi(E_i))$.

\medskip
The hypersurfaces we consider in the present paper are a special kind of hyperplane arrangements in $\mathbb{R}^N$ and $(\mathbb{P}_{\mathbb{R}}^1)^N$, respectively.
Let
\begin{equation}
f_N(x) :=
{\displaystyle\prod\limits_{i=1}^{N}}
x_{i}\text{ }
{\displaystyle\prod\limits_{i=1}^{N}}
\left(  1-x_{i}\right)  \text{ }
{\displaystyle\prod\limits_{1\leq i<j\leq N}}
\left(  x_{i}-x_{j}\right)   , \label{Eq_0}
\end{equation}
and $A_N:= f_N^{-1}(0)$ the induced affine hyperplane arrangement in $\mathbb{R}^N$.
There is an \lq economic\rq\ embedded resolution $\pi:X\rightarrow\mathbb{R}^{N}$ of $A_N$, obtained by blowing up all intersections $Z_j$ of components of $A_N$ that contain the points $\underline{0}=(0,0,\dots,0)$ or $\underline{1}=(1,1,\dots,1)$. Up to permutation of the coordinates, these blow-up centres are precisely the affine subspaces $Z_j$ given by
\begin{equation}\label{centres}
\begin{array}
[c]{l}
(a) \quad x_1 =\dots=x_r=0  \ (2\leq r \leq N),  \text { or } \\
(b) \quad  x_1=\dots=x_r=1  \ (2\leq r \leq N), \text{ or } \\
(c) \quad x_1 =x_2\dots=x_r  \ (3\leq r \leq N).
\end{array}
\end{equation}
More precisely, one blows up first the centres of dimension $0$ (which are just $\underline{0}$ and $\underline{1}$). After these two blow-ups,  the (transforms of the) centres of dimension $1$ become disjoint, and next one blows up  these centres.  One continues this way, blowing up centres of increasing dimension, ending with centres of dimension $N-2$. The embedded resolution $\pi$ is the composition of all these blow-ups.

Note that, again up to permutation of the coordinates, the $N+N+\frac{N(N-1)}{2}$ irreducible components  of $A_N$ are of the same form as (\ref{centres}), where $r=1$, $r=1$ and $r=2$ in (a), (b) and (c), respectively. We 

In the sequel, we will denote both these components of $A_N$ and the blow-up centres in (\ref{centres}) by $Z_j$. Thus, the strict transforms of the components $Z_j$ of $A_N$ are the $E_j, j\in T_s$, and the $Z_j$ from (\ref{centres})  are the images by $\pi$ of the $E_j, j\in T_e$.

One easily verifies that then, all together, this procedure results in $3\cdot 2^N -N -3$ components $E_j, j\in T,$ of $\pi^{-1}(A_N)$, or, equivalently, $3\cdot 2^N -N -3$ subspaces $Z_j, j\in T$.

\begin{example}\rm
When $N=2$, there are $5$ components of $A_2$, drawn in Figure 1.  We only blow up  the points $\underline{0}$ and $\underline{1}$, yielding in total $7$ subspaces $Z_j$.
\end{example}

\medskip
\centerline{
\beginpicture
\setcoordinatesystem units <.35truecm,.35truecm>

\putrule from -2 0 to 12 0
\putrule from -2 10 to 12 10
\putrule from 0 -2 to 0 12
\putrule from 10 -2 to 10 12

\setlinear    \plot  -2 -2   12 12  /


\put {$x_1=x_2$} at 5.3 4.5
\put {$x_1=0$} at -2 5
\put {$x_1=1$} at 12 5
\put {$x_2=0$} at 5 -.9
\put {$x_2=1$} at 5 10.9
\put{$\bullet$} at 0 0
\put{$\bullet$} at 10 10
\put{$\underline{0}$} at -0.7 0.7
\put{$\underline{1}$} at 10.7 9.3

\put{Figure 1} at 5 -3.5

\setcoordinatesystem units <.35truecm,.35truecm> point at -24 -3.5

\putrule from 8 8 to 0 8
\putrule from 8 8 to 8 0
\putrule from -3 -5 to -3 3
\putrule from -3 -5 to 5 -5
\linethickness=1.5pt
\putrule from 0 0 to 0 8
\putrule from 0 0 to 8 0
\putrule from 5 3 to -3 3
\putrule from 5 3 to 5 -5
\setlinear    \plot  0 8   -3 3 /
\setlinear    \plot  8 0   5 -5 /

\setplotsymbol ({$\cdot$})
\setlinear    \plot  0 0    5 3 /
\setlinear    \plot  0 0   -3 -5 /
\setlinear    \plot 5 3   8 8 /


\put{$\bullet$} at 0 0
\put{$\bullet$} at 5 3
\put{$\underline{0}$} at -0.7 0.7
\put{$\underline{1}$} at 4.3 3.7

\put{Figure 2} at 2 -7

\endpicture
}

\medskip
\begin{example}\rm
When $N=3$, there are $9$ components of $A_3$, namely $\{x_i=0\}, \{x_i=1\} (i=1,2,3)$ and $\{x_1=x_2\}, \{x_1=x_3\}, \{x_2=x_3\}$.  We first blow up  $\underline{0}$ and $\underline{1}$, and then (the transforms of) the $7$ lines $\{x_i=x_j=0\}, \{x_i=x_j=1\} (i\neq j)$ and $\{x_1=x_2=x_3\}$, indicated in bold in Figure 2. So in total we have $18$ subspaces $Z_j$.
\end{example}

\bigskip
In order to study the integrals (\ref{zeta_function_string}) over unbounded domains, one considers for instance the compactification $(\mathbb{P}_{\mathbb{R}}^1)^N$ of $\mathbb{R}^N$, and the induced hyperplane arrangement $\bar{A}_N$ in $(\mathbb{P}_{\mathbb{R}}^1)^N$, consisting of the closures of the components of $A_N$ in $(\mathbb{P}_{\mathbb{R}}^1)^N$, together with the $N$ \lq hyperplanes at infinity\rq. More precisely, taking $z_i$ as \lq coordinate at infinity\rq\ on $\mathbb{P}_{\mathbb{R}}^1$ (thus $z_i= 1/x_i$), the $N$ new hyperplanes are locally given by $z_i=0 \ (1\leq i \leq N)$.  On the other hand, the closures of the ones in $A_N$ given by $x_i=1$ and $x_i=x_j$ acquire the local descriptions $z_i=1$ and $z_i=z_j \ (i<j)$, respectively, at infinity.

Let us denote the point $\{z_1=z_2=\dots=z_N=0\}$ by $\underline{\infty}$. The components of $\bar{A}_N$ containing it are precisely $z_i =0 \  (1\leq i\leq N)$ and  $z_i=z_j \ (1\leq i<j\leq N)$. This is exactly the same local description as for the components containing $\underline{0}$. Hence, to construct
a similar \lq economic\rq\ embedded resolution $\pi:\bar{X}\rightarrow (\mathbb{P}_{\mathbb{R}}^1)^N$  of $\bar{A}_N$, one only needs to blow up also all intersections $Z_j$ of components of $\bar{A}_N$ that contain the point $\underline{\infty}$.
Now one verifies that, all together, there are $2^{N+2} -N -4$ components $E_i, i\in T,$ of $\pi^{-1}(\bar{A}_N)$, or, equivalently, $2^{N+2} -N -4$ subspaces $Z_j, j\in T$.
\smallskip
It is exactly this embedded resolution that was used in \cite{BVZ-JHP} to study Koba-Nielsen zeta functions.

\medskip
\begin{example}\rm
For $N=2$, there are two extra components in $\bar{A}_2$, namely $\{z_1=0\}$ and $\{z_2=0\}$, whose intersection is $\underline{\infty}$, as sketched in Figure 3. In order to construct $\bar{X}$, we perform one extra blow-up at $\underline{\infty}$. Now we have in total $7+3=10$ subspaces $Z_j$.
\end{example}

\medskip
\centerline{
\beginpicture
\setcoordinatesystem units <.35truecm,.35truecm>

\putrule from -2 0 to 12 0
\putrule from 0 -2 to 0 12

\putrule from -2 6 to 12 6
\putrule from 6 -2 to 6 12

\setlinear    \plot  -1.8 -1.8   11.6 11.6  /

\setdashes
\putrule from 10 -2 to 10 12
\putrule from -2 10 to 12 10


\put {$z_1=0$} at 12 4
\put {$z_2=0$} at 3 10.9
\put{$\bullet$} at 0 0
\put{$\bullet$} at 6 6
\put{$\bullet$} at 10 10
\put{$\underline{0}$} at -0.7 0.7
\put{$\underline{1}$} at 6.7 5.3
\put{$\underline{\infty}$} at 10.8 9.3

\put{Figure 3} at 5 -3.5
\endpicture
}

\medskip

\begin{example}\rm$\mathfrak{M}_{0,N+3}$
For $N=3$, there are three extra components in $\bar{A}_3$, namely $\{z_i=0\} (i=1,2,3)$, whose intersection is $\underline{\infty}$. In order to construct $\bar{X}$, we perform an extra blow-up at $\underline{\infty}$, and then three extra blow-ups with centre (the transforms of) the lines $\{z_i=z_j=0\} (i\neq j)$. Now we have in total $18+7=25$ subspaces $Z_j$.
\end{example}

\begin{remark} \rm 
It is well known that the complement 
$\mathbb{R}^N \setminus A_N = (\mathbb{P}_{\mathbb{R}}^1)^N \setminus \bar{A}_N$ can be viewed as the moduli space $\mathfrak{M}_{0,N+3}$ of rational curves with $N+3$ marked points.
The nonsingular projective variety  $\bar{X}$, constructed above, can in fact be viewed as the Deligne-Mumford compactification $\overline{\mathfrak{M}}_{0,N+3}$ of $\mathfrak{M}_{0,N+3}$ (using stable marked curves), and, equivalently, to the \lq dihedral compactification\rq\ in \cite[Section 2]{Brown}.

The latter is constructed via an embedding of $\mathfrak{M}_{0,N+3}$ in an appropriate power of $\mathbb{P}_{\mathbb{R}}^1\setminus \{0.1,\infty\}$, using cross-ratios. It is covered by  \lq dihedral\rq\ affine charts, associated to the various dihedral structures on $\{1,\dots,N+3\}$.
In contrast, $\bar{X}$ is covered by less affine charts, which are all isomorphic to an affine space. For example, when $N=2$, there are $12$ dihedral charts, while the construction of 
 $\bar{X}$, by three point blow-ups on $(\mathbb{P}_{\mathbb{R}}^1)^2$, naturally induces a covering by $8$ affine planes.
 Dihedral charts and coordinates are canonical, while our affine space charts and their coordinates  depend on some (quite natural) choices.
 \end{remark}

\begin{remark} \rm  
There is a canonical embedded resolution for hyperplane arrangements, the so-called \emph{wonderful resolution}, obtained by blowing up \emph{all} possible intersections of components \cite{DP}.  In order to compare efficiency, in the case of $A_N$, when constructing $\pi$ above, we only blow up two points,  $\underline{0}$ and $\underline{1}$, while, for the wonderful resolution, one blows up all the $2^N$ points with coordinates $0$ or $1$.
\end{remark}

\medskip
\subsection{Multivariate zeta functions}\label{zetafunctions}

Let $f_{1}(x),\ldots,f_{m}(x)\in\mathbb{R}\left[  x_{1},\ldots,x_{n}\right]  $
be non-constant polynomials and $H:=\cup_{i=1}^{m}f_{i}^{-1}(0)$. We set $\boldsymbol{f}:=\left(  f_{1},\ldots
,f_{m}\right)  $ and $\boldsymbol{s}:=\left(  s_{1},\ldots,s_{m}\right)
\in\mathbb{C}^{m}$.
We consider here some domain of integration $D\subset \mathbb{R}^n$ of dimension $n$, that is determined by linear inequalities; we include the case \lq no inequalities\rq, being $D=\mathbb{R}^n$.

For each function $\Theta:\mathbb{R}^{n}\rightarrow\mathbb{C}$ that is smooth on $D$, the multivariate local zeta function attached to
$(D,\boldsymbol{f},\Theta)$ is defined as
\begin{equation}
Z_{\Theta}\left(D,  \boldsymbol{f};\boldsymbol{s}\right)  =\int
\limits_{D} \Theta\left(  x\right) \prod
\limits_{i=1}^{m}\left\vert f_{i}(x)\right\vert ^{s_{i}}
dx. \label{Zeta_Function}
\end{equation}
When $D$ is bounded or when $\Theta$ has compact support, it is well known that $Z_{\Theta}\left( D, \boldsymbol{f};\boldsymbol{s}\right)$ converges and is holomorphic when $\operatorname{Re}(s_{i})>0$ for all $i$. Furthermore, it admits a meromorphic
continuation to the whole $\mathbb{C}^{m}$, see \cite{Atiyah}, \cite{Loeser}. The case $D=\mathbb{R}^{n}$ has been studied intensively, \cite{Igusa}, \cite{Igusa-old}, \cite{G-S}, \cite{AVG}, \cite{Ber}, \cite{Veys-Zuniga-Advances}. By applying Hironaka's resolution of singularities theorem
to $H$, the study of integrals of type (\ref{Zeta_Function}) is reduced to the
case of monomial integrals, which can be studied directly, see e.g.
\cite{Loeser}, \cite{Igusa}, \cite[Chap. II, \S \ 7, \ Lemme 4]{AVG}, \cite[Lemme 3.1]{D-S},
\cite[Chap. I, Sect. 3.2]{G-S}, and \cite[Lemma 4.5]{Igusa-old}.  When $D$ is not bounded and when $\Theta$ does not have compact support, one can consider for instance the compactification $(\mathbb{P}_{\mathbb{R}}^1)^N$ of $\mathbb{R}^N$, and use an embedded resolution of the closure of $H$ in $(\mathbb{P}_{\mathbb{R}}^1)^N$. Then one typically takes some appropriate (finite) partition of unity $(\rho_\ell(x))_\ell$ on $(\mathbb{P}_{\mathbb{R}}^1)^N$, where each $\rho_\ell$ has compact support. Then
$$
Z_{\Theta}\left( D, \boldsymbol{f};\boldsymbol{s}\right) = \sum_\ell
\int\limits_{D} \rho_\ell(x) \Theta\left(  x\right) \prod
\limits_{i=1}^{m}\left\vert f_{i}(x)\right\vert ^{s_{i}}
dx,
$$
reducing the situation to smooth functions $\Theta$ with compact support, see \cite{BVZ-JHP}.  Note that, even when each integral in the sum above converges and is holomorphic in some nonempty open domain of $\mathbb{C}^m$, this is not necessarily the case for $Z_{\Theta}\left( D, \boldsymbol{f};\boldsymbol{s}\right)$, since the intersection of these domains can be empty.

We recall some facts about monomial integrals. The simplest case is the one-dimensional integral, see e.g. \cite[Chap. I, Sect. 3.2-3.3]{G-S}.

\begin{lemma}[\protect Gel'fand-Shilov] \label{basic}

Let $\varphi:\mathbb{R} \to \mathbb{C}$ be a smooth function with compact support. The integrals
$$
J_+(s):=\int_{\mathbb{R}_{\geq 0}} \varphi(x)  |x|^s  dx   \qquad\text{ and }\qquad J(s) := \int_\mathbb{R} \varphi(x) |x|^s  dx,
$$
viewed as functions of the complex variable $s$,
converge and are holomorphic in the region $\operatorname{Re(s)}>-1$, and have a meromorphic continuation to the whole complex plane with possible poles of order 1.

More precisely, if one considers the integrals as distributions in $\varphi$, then $J_+(s)$ and $J(s)$ have poles at all negative integers and at all odd negative integers, respectively.
In particular, $-1$ is a pole in both cases.
\end{lemma}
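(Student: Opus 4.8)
The plan is to reduce everything to the two basic integrals by elementary manipulations, since Lemma~\ref{basic} is the foundational building block on which the whole resolution-of-singularities machinery rests. First I would handle $J_+(s)$. Writing $\varphi(x) = \varphi(0) + x\psi(x)$ with $\psi$ smooth of compact support (Hadamard's lemma / Taylor's theorem with integral remainder), one gets
$$
J_+(s) = \varphi(0)\int_0^A x^s\,dx + \int_0^\infty x^{s+1}\psi(x)\,dx,
$$
where $[0,A]$ contains the support of $\varphi$ on $\mathbb{R}_{\geq 0}$. The first term equals $\varphi(0)\,A^{s+1}/(s+1)$, which is meromorphic on $\mathbb{C}$ with a single simple pole at $s=-1$; the second term is an integral of the same shape but with $s$ replaced by $s+1$, so iterating this procedure $k$ times exhibits $J_+(s)$ as a sum of a function holomorphic on $\operatorname{Re}(s) > -k-1$ and an explicit finite sum of terms $\varphi^{(j)}(0)\,A^{s+j+1}/(j!(s+j+1))$, $0\leq j\leq k-1$. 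Since $k$ is arbitrary, this gives the meromorphic continuation to all of $\mathbb{C}$ with at most simple poles, located (when viewed as a distribution in $\varphi$, i.e.\ tracking all the $\varphi^{(j)}(0)$ separately) exactly at the negative integers, the residue at $s=-j-1$ being $\varphi^{(j)}(0)/j!$. Convergence and holomorphy for $\operatorname{Re}(s) > -1$ is immediate since $|x^s| = x^{\operatorname{Re}(s)}$ is integrable near $0$ there and $\varphi$ has compact support.

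For $J(s)$ I would split the domain as $\mathbb{R} = \mathbb{R}_{\geq 0} \cup \mathbb{R}_{\leq 0}$ and apply the substitution $x\mapsto -x$ to the second piece, obtaining $J(s) = J_+^{\varphi}(s) + J_+^{\check\varphi}(s)$ where $\check\varphi(x) = \varphi(-x)$. By the previous paragraph each summand is meromorphic with simple poles at the negative integers and residue at $s=-j-1$ proportional to $\varphi^{(j)}(0) \pm (-1)^j\varphi^{(j)}(0)$; for $j$ odd the two contributions cancel, so the only surviving poles of $J(s)$ (as a distribution in $\varphi$) are at the negative integers with $j$ even, i.e.\ at the odd negative integers $-1, -3, -5, \dots$. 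In particular $s=-1$ is a pole in both cases, since the residue there involves $\varphi(0)$ alone (for $J_+$) or $2\varphi(0)$ (for $J$), which does not vanish identically. Alternatively, for $J(s)$ one can expand $\varphi(x) = \sum_{j<k}\varphi^{(j)}(0)x^j/j! + x^k\psi_k(x)$ and use that $\int_{-A}^{A} x^{s+j}\,dx$ vanishes for $j$ odd and equals $2A^{s+j+1}/(s+j+1)$ for $j$ even.

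The only genuinely delicate point is the phrase ``viewed as distributions in $\varphi$'': the assertion is not merely that $J_+$ has \emph{possible} poles at the negative integers, but that for \emph{generic} $\varphi$ it actually \emph{has} a pole at each of them (and correspondingly that $J$ genuinely has poles at each odd negative integer). This is exactly what the residue computation above delivers: the residue of $J_+$ at $s=-j-1$ is the distribution $\varphi\mapsto \varphi^{(j)}(0)/j!$, which is nonzero, and the residue of $J$ at $s=-2j-1$ is $\varphi\mapsto 2\varphi^{(2j)}(0)/(2j)!$, again nonzero. So the set of poles is precisely as claimed, not just contained in it. I expect this bookkeeping of residues — keeping track of exactly which Taylor coefficients of $\varphi$ survive after the $x\mapsto -x$ folding — to be the main (though still routine) obstacle; everything else is the standard ``peel off one term of the Taylor expansion'' argument. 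I would then simply cite \cite[Chap.~I, Sect.~3.2--3.3]{G-S} for the details, as the excerpt already points there.
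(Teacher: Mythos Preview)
The paper does not actually prove Lemma~\ref{basic}; it merely states it and refers to \cite[Chap.~I, Sect.~3.2--3.3]{G-S}. Your Taylor-expansion argument is exactly the classical proof given there, so your proposal is correct and matches the intended approach. One small cosmetic point: after writing $\varphi(x)=\varphi(0)+x\psi(x)$, the function $\psi$ is smooth but not compactly supported (for $x>A$ it equals $-\varphi(0)/x$), so the second integral should be written as $\int_0^A x^{s+1}\psi(x)\,dx$ rather than $\int_0^\infty$; equivalently, just expand $\varphi$ by Taylor's theorem with remainder on $[0,A]$ from the start, as you in fact do in your alternative description.
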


A higher-dimensional and multivariate version, presented in a useful way for the sequel, is as follows.
The proof  is similar to the one given in \cite[Lemma 4.5]{Igusa-old}.

\begin{lemma}
\label{Lemma0}
Let $\varphi:\mathbb{R}^n \to \mathbb{C}$ be a smooth function with compact support.
Consider the integrals
\[
J_+(\boldsymbol{s})=  J_+(s_{1},\ldots,s_{m}):=
{\textstyle\int\limits_{\{y_1\geq 0\}}} \varphi(y)
\Phi\left(  y;s_{1},\ldots,s_{m}\right) \left\vert
y_{1}\right\vert ^{\sum_{j=1}^{m}a_{{j}}s_{j}+b-1}
dy,
\]
and
\[
J(\boldsymbol{s})=J(s_{1},\ldots,s_{m}):=
{\textstyle\int\limits_{\mathbb{R}^{n}}} \varphi(y)
\Phi\left(  y;s_{1},\ldots,s_{m}\right) \left\vert
y_{1}\right\vert ^{\sum_{j=1}^{m}a_{{j}}s_{j}+b-1}
dy\text{, }
\]
where the $a_{j}$ are integers (not all zero)
and $b$ is an integer, and $\Phi\left(  y;s_{1},\ldots,s_{m}\right)  $ is a smooth function in $y$, non-vanishing on the support of $\varphi$, and  which is holomorphic in the complex parameters $s_{1},\ldots,s_{m}$.
Then the following assertions hold:

\noindent(i)  
the integrals  $J_+(\boldsymbol{s})$ and $J(\boldsymbol{s})$ are convergent and define  holomorphic functions in the domain
 \[
\mathcal{R}:=
\left\{  (s_{1},\ldots,s_{m})\in\mathbb{C}^{m} \mid \sum_{j=1}^{m}a_{j}
\operatorname{Re}(s_{j})+b>0\right\}  ;
\]
\noindent(ii) 
they admit an analytic continuations to the whole $\mathbb{C}^{m}$, as a meromorphic functions with poles belonging to
\[
\bigcup_{t\in\mathbb{N}}\left\{  \sum_{j=1}
^{m}a_{j}s_{j}+b+t=0\right\};
\]

\noindent (iii) considering $J_+(\boldsymbol{s})$ and $J(\boldsymbol{s})$  as distributions in $\varphi$, the hyperplane $\sum_{j=1}^{m}a_{j}s_{j}+b=0$ belongs to the polar locus of both analytic continuations.
\end{lemma}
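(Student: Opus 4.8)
The plan is to reduce everything to the one‑variable situation of Lemma~\ref{basic}, treating the extra variables $y_2,\ldots,y_n$ and the auxiliary factor $\Phi$ as harmless parameters. First I would set, for fixed $(s_1,\ldots,s_m)$, the exponent $\sigma:=\sum_{j=1}^m a_j s_j+b-1$, so that the integrand is $\varphi(y)\,\Phi(y;\boldsymbol{s})\,|y_1|^{\sigma}$. For part (i), on the region $\mathcal{R}$ we have $\operatorname{Re}(\sigma)>-1$; since $\varphi$ has compact support, $\Phi$ is smooth and non‑vanishing there (hence bounded, together with all its derivatives, on $\operatorname{supp}\varphi$), the function $|y_1|^{\operatorname{Re}(\sigma)}$ is locally integrable in $y_1$ near $0$, and the integral in the remaining variables is over a compact set. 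So $|J_\pm(\boldsymbol{s})|$ is bounded by $\|\varphi\|_\infty\cdot\|\Phi\|_\infty\cdot\int_{\text{cpt}}|y_1|^{\operatorname{Re}(\sigma)}\,dy$, which is finite; absolute convergence is locally uniform in $\boldsymbol{s}$ on $\mathcal{R}$, and holomorphy follows by Morera/Fubini (or differentiation under the integral sign, using that $\partial_{s_j}$ of the integrand is again of the same type times a $\log|y_1|$, still locally integrable).

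For part (ii), the standard device is repeated integration by parts in $y_1$ to lower the pole order, exactly as in the one‑variable proof of Lemma~\ref{basic}. Writing $\mathbf a\cdot\boldsymbol s+b=:\alpha(\boldsymbol s)$, one uses the identity $|y_1|^{\alpha-1}=\tfrac{1}{\alpha}\,\partial_{y_1}\!\big(y_1|y_1|^{\alpha-1}\big)$ on $\{y_1>0\}$ (and its evident sign‑adjusted analogue on $\{y_1<0\}$), and integrates by parts; the boundary terms vanish because $\varphi$ has compact support (and, for $J_+$, the boundary term at $y_1=0$ vanishes since $\operatorname{Re}(\alpha)>0$ there). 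Each integration by parts moves a $\partial_{y_1}$ onto $\varphi\cdot\Phi$ — again a smooth compactly supported function times a function holomorphic in $\boldsymbol s$ — and replaces $\alpha$ by $\alpha+1$ while producing an explicit factor $1/\alpha(\boldsymbol s)$. After $k$ steps one obtains
\[
J_\pm(\boldsymbol s)=\frac{1}{\alpha(\boldsymbol s)\,(\alpha(\boldsymbol s)+1)\cdots(\alpha(\boldsymbol s)+k-1)}\,\widetilde J_\pm^{(k)}(\boldsymbol s),
\]
where $\widetilde J_\pm^{(k)}$ is an integral of the same shape with exponent $\alpha(\boldsymbol s)+k-1$, hence holomorphic on $\{\operatorname{Re}(\alpha(\boldsymbol s))+k>0\}$ by part (i). Since $k$ is arbitrary, this exhibits the meromorphic continuation to all of $\mathbb C^m$ with poles contained in $\bigcup_{t\in\mathbb N}\{\alpha(\boldsymbol s)+t=0\}$. (One checks the $k$ identities patch together consistently on overlaps because they are all equal to $J_\pm$ on $\mathcal R$ and meromorphic functions agreeing on an open set agree everywhere.)

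For part (iii), I want to show the hyperplane $\alpha(\boldsymbol s)=0$ genuinely appears in the polar locus when $J_\pm$ is viewed as a distribution in $\varphi$. The cleanest route is to compute the residue of the meromorphic family along $\alpha(\boldsymbol s)=0$ and exhibit a test function making it nonzero. Concretely, fix a generic point $\boldsymbol s_0$ on that hyperplane, choose a complex line $\boldsymbol s=\boldsymbol s_0+\lambda\boldsymbol v$ with $\mathbf a\cdot\boldsymbol v\neq0$ so that along it $\alpha$ becomes (a nonzero multiple of) the single variable $\lambda$, and reduce to the one‑variable computation: the residue at $\lambda=0$ of $\int \psi(y_1)|y_1|^{\lambda-1}dy_1$ equals $2\psi(0)$ for $J$ (splitting into $y_1>0$ and $y_1<0$ and using the elementary $\operatorname{Res}_{\lambda=0}\int_0^\infty \psi(y_1)y_1^{\lambda-1}dy_1=\psi(0)$), and $\psi(0)$ for $J_+$, where now $\psi(y_1)=\int \varphi(y)\Phi(y;\boldsymbol s_0)\,dy_2\cdots dy_n$. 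Picking $\varphi$ supported near a point where $y_1=0$ and $\Phi\neq0$ with $\varphi\geq0$ and $\varphi\not\equiv0$ there makes $\psi(0)\neq0$ (using $\Phi$ non‑vanishing and, if necessary, multiplying by a phase so $\Phi(\cdot;\boldsymbol s_0)$ stays away from a line through $0$). Hence the residue is nonzero, so $\alpha(\boldsymbol s)=0$ is a genuine polar hyperplane of the distribution. The main obstacle is a bookkeeping one: making sure the integration‑by‑parts recursion of part (ii) is carried out cleanly enough that the explicit pole factor $1/\prod_{i=0}^{k-1}(\alpha+i)$ is visible and that the "remainder" integrals are manifestly of the same type (so that part (i) applies verbatim); and, for part (iii), arranging the test function and the phase of $\Phi$ so the residue is provably nonzero rather than merely generically nonzero.
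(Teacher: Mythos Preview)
Your proposal is correct and follows the classical Gel'fand--Shilov/Bernstein approach via repeated integration by parts in $y_1$, which is precisely the method the paper invokes (it does not give a self-contained proof but refers to \cite[Lemma~4.5]{Igusa-old}, where the same recursion is carried out). Your treatment of (iii) by restricting to a complex line transverse to the hyperplane and identifying the residue with $\psi(0)$ (respectively $2\psi(0)$) is the standard way to upgrade the one-variable residue computation of Lemma~\ref{basic} to the multivariate setting, and your care in choosing $\varphi$ supported near a point of $\{y_1=0\}$ where $\Phi\neq 0$ to force $\psi(0)\neq 0$ is exactly what is needed.

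One small bookkeeping remark: after one integration by parts the factor $\partial_{y_1}(\varphi\Phi)$ is no longer literally of the form ``test function times a single non-vanishing $\Phi$'', but rather a finite sum of such terms (or, equivalently, a single smooth compactly supported function of $y$ that is holomorphic in $\boldsymbol s$). This is harmless for (i) and (ii), since the non-vanishing hypothesis on $\Phi$ is only used in (iii), and there you work directly with the original $\Phi$ via the residue formula rather than with the differentiated integrand.
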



\medskip

The integral (\ref{Zeta_Function}) is studied by using the \lq change of variables\rq\ induced by a resolution $\pi$ as in Theorem \ref{thresolsing}. In a {\em generic} point of any $E_i, i\in T,$ the pullback of $\Theta\left(  x\right) \prod\limits_{i=1}^{m}\left\vert f_{i}(x)\right\vert ^{s_{i}}dx$ can be described in local coordinates $y$ as
\begin{equation}\label{integrand}
\varphi(y)\Phi\left(  y;s_{1},\ldots,s_{m}\right) \left\vert y_{1}\right\vert ^{\sum_{j=1}^{m}a_{{j}}s_{j}+b-1}dy,
\end{equation}
as in Lemma \ref{Lemma0}, where $E_i$ is locally given by $y_1=0$.

\begin{remark}\rm
(1) When $E_i$ is \lq affine\rq\ in the sense that $\pi(E_i)$ has non-empty intersection with $\mathbb{R}^N$, then  in (\ref{integrand}) all $a_j$ are nonnegative and $b$ is positive. This is always the case when $D$ is bounded or $\Theta$ has compact support, since then we only need a resolution $\pi$ of $H \subset \mathbb{R}^N$. 

(2) The study of the convergence, analytic continuation, and polar locus of the function (\ref{Zeta_Function}) is then essentially reduced to the study of finitely many integrals with an integrand of the form (\ref{integrand}). The (global) integration domain is the strict transform of $D$ by $\pi$; locally it looks typically like the integration domains $\{y_1\geq 0\}$ or $\mathbb{R}^N$ in Lemma \ref{Lemma0} or some set disjoint from $\{y_1= 0\}$.
\end{remark}

\subsection{\label{Koba-Nielsen-Section} Koba-Nielsen local zeta functions}

When $D={\mathbb{R}}^{n}$ and $\varphi \equiv 1$, the integral $Z_{\varphi
}^{(N)}(D;\boldsymbol{s})$ is a {\em Koba-Nielsen local zeta function}, denoted as $Z_{\mathbb{R}}^{(N)}\left( \boldsymbol{s}\right) $, i.e.,
\begin{equation}
Z_{\mathbb{R}}^{(N)}\left( \boldsymbol{s}\right) :=\int\limits_{\mathbb{R}
^{N}}\prod\limits_{i=1}^{N}\left\vert x_{i}\right\vert ^{s_{0i}}\left\vert
1-x_{j}\right\vert ^{s_{i(N+1)}}\text{ }\prod\limits_{1\leq i<j\leq
N}\left\vert x_{i}-x_{j}\right\vert ^{s_{ij}}dx,
\label{zeta_funtion_string}
\end{equation}
where $dx=\prod\nolimits_{i=1}^{N}dx_{i}$  denotes the Haar measure on $\mathbb{R}^{N}$.  In \cite{BVZ-JHP}, we showed that the Koba-Nielsen local zeta function $Z_{\mathbb{R}}^{(N)}\left( \boldsymbol{s}\right)$ is a linear combination of multivariate local zeta functions with test functions {\em having compact
support}.  Each of these local zeta functions is holomorphic in some open subset of ${\mathbb{C}}^{\boldsymbol{d}}$ (depending  on the zeta function), and admits a meromorphic continuation to the whole ${\mathbb{C}}^{\boldsymbol{d}}$.
Using the embedded resolution $\pi $ of $\bar{A}_{N}$ described above, we showed that all these local zeta functions are
convergent and holomorphic in a nonempty (open) {\em common} domain, determined by 
$2^{N+2}-N-4$ inequalities, coming from that same number of components 
$E_{i},i\in T,$ that arise in the resolution $\pi $, \cite[Proposition 5.3]{BVZ-JHP}. 
We then concluded that  $Z_{\mathbb{R}}^{(N)}\left( \boldsymbol{s}\right) $ itself is also
holomorphic in an open subset of ${\mathbb{C}}^{\boldsymbol{d}}$, and that it
admits a meromorphic continuation to the whole ${\mathbb{C}}^{\boldsymbol{d}}
$, \cite[Theorem 4.1]{BVZ-JHP}. 

\begin{remark}
One of the motivations for these meromorphic continuation results in  \cite{BVZ-JHP} was the regularize Koba-Nielsen open string amplitudes; we refer to  \cite[Section 1]{BVZ-JHP} for more information.
 In fact, since Hironaka's theorem is valid over any field of characteristic zero, we were able to regularize the Koba-Nielsen amplitudes
defined over $\mathbb{R}$, $\mathbb{C}$, or $\mathbb{Q}_{p}$, the field of $p
$-adic numbers, at the same time, \cite[Theorem 7.1]{BVZ-JHP}; see also \cite{BGZ-LMP}, \cite{Symmetry}, and the references therein. 
\end{remark}


We now state the explicit convergence and continuation results from \cite{BVZ-JHP}, noting that there the variables appearing in $Z_{\mathbb{R}}^{(N)}\left( \boldsymbol{s}\right)$ were labeled as $x_{2},\ldots ,x_{N-2}$, and here as $x_{1},\ldots ,x_{N}$.

\begin{theorem}[{\protect\cite[Theorem 4.1 and Proposition 5.3]{BVZ-JHP}}]
\label{conditions} The zeta function $Z_{\mathbb{R}}^{(N)}\left( \boldsymbol{s}\right) $
admits a meromorphic continuation to the whole ${\mathbb{C}}^{\boldsymbol{d}}
$, and the convergence conditions (and associated possible polar locus)
associated to each $E_{\ell }$ or $Z_{\ell }$, $\ell \in T,$ are as follows.


\noindent
Each of the $\frac{N(N+3)}{2}$ components $Z_\ell$  of $A_N$ induces the condition
\begin{equation}
\operatorname{Re}(s_{ij}) > -1
\label{EQ A}
\end{equation}
for the corresponding variable $s_{ij}$.

\noindent
The  exceptional $E_\ell$, coming from centres $Z_\ell$ of dimension $k\in\{0,\dots,N-2\}$  as in case (a) in (\ref{centres}), range over all subsets $J \subset \{1,\dots,N\}$ with $\sharp J =N-k$;  such an $E_\ell$ induces the condition
\begin{equation}
\sum_{j\in J}  \operatorname{Re}(s_{0j})  + \sum_{\substack{i,j \in J \\ i<j } }   \operatorname{Re}(s_{ij}) > - (N-k).
\label{EQ B}
\end{equation}

\noindent
The exceptional $E_\ell$, coming from  centres $Z_\ell$ of dimension $k\in\{0,\dots,N-2\}$ as in case (b) in (\ref{centres}), range over all subsets $J \subset \{1,\dots,N\}$ with $\sharp J =N-k$;   such an $E_\ell$ induces the condition
\begin{equation}
\sum_{j\in J}  \operatorname{Re}(s_{j(N+1)})  + \sum_{\substack{i,j \in J \\ i<j } }   \operatorname{Re}(s_{ij}) > - (N-k).
\label{EQ C}
\end{equation}

\noindent
The exceptional $E_\ell$, coming from  centres $Z_\ell$ of dimension $k\in\{1,\dots,N-2\}$ as in case (c) in (\ref{centres}), range over all subsets $J \subset \{1,\dots,N\}$ with $\sharp J =N-k+1$;  such an $E_\ell$ induces the condition
\begin{equation}
 \sum_{\substack{i,j \in J \\ i<j }}    \operatorname{Re}(s_{ij}) > - (N-k).
\label{EQ D}
\end{equation}

\noindent
The components $E_\ell$ \lq at infinity\rq, that is, with $\pi(E_\ell)=Z_\ell \subset (\mathbb{P}_{\mathbb{R}}^1)^N \setminus \mathbb{R}^N$ of dimension $d\in\{0,\dots,N-1\}$, range over all subsets $J \subset \{1,\dots,N\}$ with $\sharp J =N-d$; such an $E_\ell$ induces the condition
\begin{equation}
\sum_{j\in J}  \operatorname{Re}(s_{0j})+ \sum_{j\in J} \operatorname{Re}(s_{j\left(
N-1\right)} ) + \sum_{\substack{i \in \{1,\dots,N\}  \setminus J \\j\in J}}    \operatorname{Re}
(s_{ij}) + \sum_{\substack{i,j \in J \\ i<j }}    \operatorname{Re}(s_{ij}) < - (N-d).
\label{EQ E}
\end{equation}

In particular, this region is nonempty and it contains the concrete \lq hypercube\rq, given by
$$\frac{-2}{N+1}<\operatorname{Re}(s_{ij})<\frac{-2}{N+3} \qquad \text{  for all } ij.$$
\end{theorem}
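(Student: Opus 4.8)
The plan is to deduce the theorem from the reduction of $Z_{\mathbb R}^{(N)}(\boldsymbol s)$ to monomial integrals via the economic embedded resolution $\pi:\bar X\to(\mathbb P^1_{\mathbb R})^N$ of $\bar A_N$ described in Subsection \ref{resolution}, along the lines of \cite[Theorem 4.1, Proposition 5.3]{BVZ-JHP}. First I would fix a finite partition of unity $(\rho_\ell)_\ell$ on $(\mathbb P^1_{\mathbb R})^N$ subordinate to a covering by affine charts in the coordinates $x_i$ or $z_i=1/x_i$, so that
\[
Z_{\mathbb R}^{(N)}(\boldsymbol s)=\sum_\ell \int_{\mathbb R^N}\rho_\ell(x)\,\prod_{i=1}^N|x_i|^{s_{0i}}|1-x_i|^{s_{i(N+1)}}\prod_{1\leq i<j\leq N}|x_i-x_j|^{s_{ij}}\,dx ,
\]
each summand being a multivariate local zeta function of type (\ref{Zeta_Function}) with compactly supported amplitude $\Theta=\rho_\ell$, attached to $\bar A_N$. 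Next I pull back each summand by $\pi$ and split once more by a partition of unity on $\bar X$; by Theorem \ref{thresolsing}(ii), near a generic point of any component $E_\ell$, $\ell\in T$ (locally $\{y_1=0\}$), the integrand acquires the monomial shape (\ref{integrand}) with exponent $\sum_j a_j s_j+b-1$. Since the closure of $D=\mathbb R^N$ is all of $(\mathbb P^1_{\mathbb R})^N$, the strict transform of $D$ is locally a full neighbourhood of such a point, so Lemma \ref{Lemma0} (applied iteratively at the loci where several $E_\ell$ meet) shows that each summand is holomorphic on $\{\sum_j a_j\operatorname{Re}(s_j)+b>0\}$ and has a meromorphic continuation whose poles lie on the hyperplane $\{\sum_j a_j s_j+b=0\}$ (and its integer translates). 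Summing yields the meromorphic continuation of $Z_{\mathbb R}^{(N)}(\boldsymbol s)$, with common convergence domain $\bigcap_{\ell\in T}\{\sum_j a_j\operatorname{Re}(s_j)+b>0\}$; it thus remains to identify $(a_j,b)$ for each $E_\ell$.

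The coefficients are the intrinsic data of the divisorial valuation $\operatorname{ord}_{E_\ell}$: for a linear factor $L$ the relevant $a$-coefficient is $\operatorname{ord}_{E_\ell}(L)$, which equals $1$ if $Z_\ell\subseteq\{L=0\}$ and $0$ otherwise, while $b-1=\operatorname{ord}_{E_\ell}(dx_1\wedge\cdots\wedge dx_N)$ is the discrepancy, equal to $\operatorname{codim}_{\mathbb R^N}(Z_\ell)-1$ for an exceptional $E_\ell$ with centre $Z_\ell\subset\mathbb R^N$ and $0$ for a strict transform; a short computation in blow-up charts confirms both facts for this resolution, which blows up dense edges in increasing dimension. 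Plugging in: a strict transform of a component $Z_\ell=\{f=0\}$ of $A_N$ gives $b=1$ and a single $a=1$, hence $\operatorname{Re}(s_{ij})>-1$, i.e. (\ref{EQ A}). An exceptional $E_\ell$ over a type-(a) centre $\{x_j=0:j\in J\}$ with $\sharp J=N-k$ (codimension $N-k$) contains precisely the factors $x_j\ (j\in J)$ and $x_i-x_j\ (i<j\ \text{in}\ J)$, each to order $1$, so $b=N-k$ and the condition reads $\sum_{j\in J}\operatorname{Re}(s_{0j})+\sum_{i<j\in J}\operatorname{Re}(s_{ij})>-(N-k)$, i.e. (\ref{EQ B}). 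Type (b) is identical after $x_j\leftrightarrow 1-x_j$, giving (\ref{EQ C}); for type (c) the centre $\{x_1=\cdots=x_r\}$ with $r=N-k+1$ has codimension $N-k$ and contains precisely the $x_i-x_j$ with $i<j$ in $J$, $\sharp J=N-k+1$, whence $b=N-k$ and (\ref{EQ D}).

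For the components at infinity one works in the $z_i$-coordinates, where $|x_i|^{s_{0i}}=|z_i|^{-s_{0i}}$, $|1-x_i|^{s_{i(N+1)}}=|z_i-1|^{s_{i(N+1)}}|z_i|^{-s_{i(N+1)}}$, $|x_i-x_j|^{s_{ij}}=|z_i-z_j|^{s_{ij}}|z_i|^{-s_{ij}}|z_j|^{-s_{ij}}$, and the Haar measure becomes $dx=\prod_i|z_i|^{-2}\,dz$. For $E_\ell$ lying over $Z_\ell\subset(\mathbb P^1_{\mathbb R})^N\setminus\mathbb R^N$ of dimension $d$ (so $\sharp J=N-d$) one has $\operatorname{ord}_{E_\ell}(z_j)=1$ for $j\in J$ and $\operatorname{ord}_{E_\ell}(z_i-z_j)=1$ for $i<j$ in $J$, all these orders vanishing once an index lies outside $J$, while $z_i-1$ never vanishes along $E_\ell$; together with $\operatorname{ord}_{E_\ell}(dz)=\operatorname{codim}(Z_\ell)-1=\sharp J-1$ this gives exponent $-\sum_{j\in J}s_{0j}-\sum_{j\in J}s_{j(N+1)}-\sum_{i<j\in J}s_{ij}-\sum_{i\notin J,\,j\in J}s_{ij}-(N-d)-1$, the sign reversal coming from the factors $|z_i|^{-s}$ and from $\prod_i|z_i|^{-2}$, so that $b=-(N-d)$ and Lemma \ref{Lemma0} produces exactly the reversed inequality (\ref{EQ E}). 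Finally, the common convergence domain is nonempty since it contains the hypercube $-\tfrac{2}{N+1}<\operatorname{Re}(s_{ij})<-\tfrac{2}{N+3}$: in each of (\ref{EQ A})--(\ref{EQ E}) one bounds each term of the left-hand side by the appropriate endpoint, counts the number of terms, and checks the resulting numerical inequality, which in every case reduces to the manifest bound $m\leq N$ for the relevant size parameter $m$ ($m=N-k$, resp. $m=N-d$).

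The main obstacle I anticipate is the bookkeeping for the components at infinity: one must track correctly, at a generic point of each successive exceptional divisor over $\underline\infty$, which of the rewritten monomial factors $z_i$ and $z_i-z_j$ (but not $z_i-1$) vanish and to what order, and combine this with the order of the blow-up Jacobian and with the $\prod_i|z_i|^{-2}$ from the Haar measure, so that the half-space comes out precisely as in (\ref{EQ E}). Everything else — the reduction via partition of unity and resolution, the application of Lemma \ref{Lemma0}, and the verifications over $\mathbb R^N$ — consists of routine resolution-of-singularities computations.
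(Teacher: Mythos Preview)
Your proposal is correct and follows precisely the approach of \cite{BVZ-JHP} that the paper cites (and whose mechanism is recalled in Subsection~\ref{zetafunctions}): reduce via a partition of unity to compactly supported pieces, pull back through the economic resolution $\pi$ of $\bar A_N$, and read off for each $E_\ell$ the exponent $\sum a_j s_j + b -1$ from the orders $\operatorname{ord}_{E_\ell}$ of the linear factors and of the volume form, invoking Lemma~\ref{Lemma0}. Your identification of the numerical data in each of the five cases, including the sign reversal at infinity coming from $|z_j|^{-s}$ and $\prod_j|z_j|^{-2}$, and your verification of the hypercube, are all in line with the computations in \cite[Proposition~5.3]{BVZ-JHP}; the paper itself does not reprove this statement but imports it, so there is nothing further to compare.
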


\begin{remark}
(i)   In \cite{BVZ-JHP}, we did not address the question
whether these $2^{N+2}-N-4$ conditions  are independent;  this result is established in Proposition \ref{independence} below. 
Consequently, the hyperplanes in $\mathbb{C}^{\boldsymbol{d}}$, given by replacing \textquotedblleft $>$\textquotedblright\ by \textquotedblleft
$=$\textquotedblright\ in the conditions (\ref{EQ A}) till  (\ref{EQ E}),  {\em all} belong to the polar locus of $Z_{\mathbb{R}}^{(N)}\left( \boldsymbol{s}\right) $.

(ii) A crucial observation is that Theorem  \ref{conditions} is also
valid for arbitrary integrals $Z_{\varphi }^{(N)}(D;\boldsymbol{s})$. 
But when $Z_{\varphi }^{(N)}(D;\boldsymbol{s})\neq Z_{\mathbb{R}}^{(N)}\left( \boldsymbol{s}\right) $, typically {\em not all}  the conditions listed in Theorem \ref{conditions} give rise to poles. The main result of the present paper is a criterion determining the hyperplanes in (i) that belong to the polar locus of $Z_{\varphi }^{(N)}(D;\boldsymbol{s})$.
\end{remark}

\begin{remark}\rm
The form of the \lq affine\rq\  conditions (\ref{EQ A}),  (\ref{EQ B}),   (\ref{EQ C}) and (\ref{EQ D}) has a straightforward geometric meaning.  
The sum always runs over those $ \operatorname{Re} (s_{ij})$ that correspond to the components of $A_N$ that contain $Z_\ell$. This is maybe not immediately clear from our notational choice for the variables  $s_{ij}$, which is just one of the standard ones when studying integrals for $f_N$.
In Section \ref{generalizations}, we will generalize our main result to arbitrary hyperplane arrangements, where the more general notation will make this apparent.
\end{remark}

\medskip
\begin{example}\label{conditionsN=2}\rm
We list the $10$ convergence conditions when $N=2$. See Figure 3 for their geometric origin.   The \lq affine\rq\ conditions (\ref{EQ A}),  (\ref{EQ B})  and (\ref{EQ C}) are
\begin{equation}
\begin{aligned}
&\operatorname{Re}(s_{01}) > -1, \ \operatorname{Re}(s_{02}) > -1,\ \operatorname{Re}(s_{12}) > -1,\ \operatorname{Re}(s_{13}) > -1,\ \operatorname{Re}(s_{23}) > -1; \\
&\operatorname{Re}(s_{01}) + \operatorname{Re}(s_{02}) + \operatorname{Re}(s_{12}) > -2; \\
&\operatorname{Re}(s_{13}) + \operatorname{Re}(s_{23}) + \operatorname{Re}(s_{12}) > -2;
\end{aligned}
\end{equation}
respectively. Here, condition  (\ref{EQ D}) does not appear.  The conditions \lq at infinity\rq\  (\ref{EQ E}) are
\begin{equation}
\begin{aligned}
&\operatorname{Re}(s_{01}) + \operatorname{Re}(s_{12}) + \operatorname{Re}(s_{13}) < -1, \quad
\operatorname{Re}(s_{02}) + \operatorname{Re}(s_{12}) + \operatorname{Re}(s_{23}) <-1; \\
&\operatorname{Re}(s_{01}) +  \operatorname{Re}(s_{02}) + \operatorname{Re}(s_{12}) + \operatorname{Re}(s_{13}) + \operatorname{Re}(s_{23}) <-2.
\end{aligned}
\end{equation}
The $25$ convergence conditions for $N=3$ are listed explicitly in \cite[Example 5.6]{BVZ-JHP} (with another index convention).
\end{example}

\begin{remark}\label{Gamma}\rm
Whenever a concrete (finite) list of convergence conditions as above is known, one can if fact express $Z_\varphi^{(N)}(D;\boldsymbol{s})$ as a sum, where each term is a product of an entire holomorphic function and (at most $N$)  Gamma functions, as explained in \cite[8.2]{BVZ-JHP}. We then have as a consequence that $Z_\varphi^{(N)}(D;\boldsymbol{s})$ can be written as a product of an entire function and Gamma functions.
For instance,  using notation as in Theorem \ref{conditions}, we can write  $Z_{\mathbb{R}}^{(N)}\left( \boldsymbol{s}\right) $  as follows:
\begin{eqnarray*}\label{Gammasum}
Z_{\mathbb{R}}^{(N)}\left( \boldsymbol{s}\right) =&
 F^{(N)}(\boldsymbol{s})\prod_{k=0}^{N-1}\prod_{\dim Z_\ell =k} \Gamma \left(\sum  s_{ij} + (N-k)\right) \times \\
& \prod_{d=0}^{N-1} \prod_{\dim Z_\ell =d} \Gamma \left(-\sum  s_{ij} - (N-d)\right).
\end{eqnarray*}

Here $F^{(N)}(\boldsymbol{s})$ is an entire function, $\Gamma(\cdot)$ denotes the classical Gamma function, the first products range over the \lq affine\rq\ $Z_\ell$, the second products over the $Z_\ell$ \lq at infinity\rq,  and the sums range over exactly the variables $s_{ij}$ occurring in  (\ref{EQ A}), (\ref{EQ B}), (\ref{EQ C}), (\ref{EQ D}) or (\ref{EQ E}), depending on the concrete $Z_\ell$.
\end{remark}

Recall that we denote both the components of $A_N$ or $\bar{A}_N$ and the blow-up centres  by $Z_j$. Thus, the strict transforms of the components $Z_j$ of $A_N$ or $\bar{A}_N$ are the $E_j, j\in T_s$, and the blow-up centres $Z_j$  are the images by $\pi$ of the $E_j, j\in T_e$.

\section{\label{Section 4} The main results}

\begin{proposition}\label{independence}
The $2^{N+2} -N -4$ conditions in Theorem \ref{conditions} are independent. That is, for each of them we can find a point $Q$ in $\mathbb{C}^{\boldsymbol{d}}$ \emph{not} satisfying that condition, but satisfying all the other ones.
\end{proposition}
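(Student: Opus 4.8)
Since each condition in Theorem~\ref{conditions} constrains only the real parts $t_{ij}:=\operatorname{Re}(s_{ij})$, the statement is purely a claim about the arrangement of the $2^{N+2}-N-4$ (open) half-spaces cut out by (\ref{EQ A})--(\ref{EQ E}) in $\mathbb{R}^{\boldsymbol d}$: each of them is irredundant, i.e.\ for each there is a real point satisfying all the others but not it (then $Q$ is any point of $\mathbb{C}^{\boldsymbol d}$ with those real parts). The plan is to first cut down the number of cases using symmetry, and then to produce the witnessing points by a controlled perturbation of a central point.

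\emph{Step 1 (symmetry reduction).} One checks that the set of $2^{N+2}-N-4$ bounding hyperplanes of $\mathbb{C}^{\boldsymbol d}$ is invariant under the group of affine automorphisms of $\mathbb{C}^{\boldsymbol d}$ generated by: the symmetric group $S_N$, permuting $\{1,\dots,N\}$ and the variables accordingly; the involution $\tau$ induced by $x_i\mapsto 1-x_i$, which swaps $s_{0i}\leftrightarrow s_{i(N+1)}$ and fixes every $s_{ij}$ with $1\le i<j\le N$; and the involution $\iota$ induced by $x_i\mapsto 1/x_i$, given by $\iota(\boldsymbol s)_{0i}=-s_{0i}-s_{i(N+1)}-\sum_{1\le j\le N,\,j\neq i}s_{ij}-2$, $\iota(\boldsymbol s)_{i(N+1)}=s_{i(N+1)}$, $\iota(\boldsymbol s)_{ij}=s_{ij}$. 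A short computation shows that $\iota$ interchanges the hyperplane (\ref{EQ E}) attached to a set $J$ with $\sharp J\ge 2$ and the hyperplane (\ref{EQ B}) attached to the same $J$, and sends (\ref{EQ E}) for $J=\{j\}$ to the hyperplane $\operatorname{Re}(s_{0j})=-1$; similarly $\tau$ interchanges (\ref{EQ B}) and (\ref{EQ C}) for each $J$, and $S_N$ acts transitively on the $J$'s of a fixed size. An affine automorphism permuting the half-spaces carries a witness for the irredundancy of $\mathcal{C}_0$ to one for the irredundancy of its image, so it suffices to treat one representative per orbit. Every orbit contains one of the following $\le 2N-1$ conditions, all of \lq affine lower-bound\rq\ shape $\sum_{ij\in S_0}t_{ij}>-w_0$: $\operatorname{Re}(s_{01})>-1$; $\operatorname{Re}(s_{12})>-1$; for $2\le m\le N$, the condition (\ref{EQ B}) with $J=\{1,\dots,m\}$; for $3\le m\le N$, the condition (\ref{EQ D}) with $J=\{1,\dots,m\}$.

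\emph{Step 2 (the witnesses).} Fix $c$ in a suitable interval just above $-\frac{2}{N+3}$ (the exact range depends on $N$, and the cases $N=2,3$ are checked separately), and let $t^{\ast}$ have all coordinates equal to $c$; by Theorem~\ref{conditions} it lies in the interior of all $2^{N+2}-N-4$ half-spaces, and to each condition $S$ is attached a positive \lq slack\rq\ ($c\,\sharp S+w_S$ for the affine ones, $-w_S-c\,\sharp S$ for those of type (\ref{EQ E})). For a representative $\mathcal{C}_0$ with variable set $S_0$, lower by an amount $\lambda>0$ the \lq difference\rq\ coordinates $s_{ij}$ ($1\le i<j\le N$) in $S_0$ --- choosing $\lambda$ just large enough that $\sum_{ij\in S_0}t_{ij}\le -w_0$ --- and raise by a small amount $\mu\ge 0$ a carefully chosen set of coordinates $s_{0j}$ with $j$ outside the index set of $\mathcal{C}_0$. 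Since lowering difference-coordinates only decreases the left-hand sides of (\ref{EQ E}), all conditions \lq at infinity\rq\ survive provided $\mu$ stays below the smallest slack of those among them that involve a raised variable; and each affine condition other than $\mathcal{C}_0$ survives because, restricted to its variables, the raised $s_{0j}$'s offset the lowered $s_{ij}$'s up to an error bounded by its slack. Verifying these inequalities, and handling the degenerate representatives $\operatorname{Re}(s_{01})>-1$ and $\operatorname{Re}(s_{12})>-1$ (where there are essentially no difference coordinates and one simply lowers $s_{01}$, resp.\ $s_{12}$), pins down the admissible windows for $\lambda$ and $\mu$.

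\emph{Main obstacle.} The slack function is \emph{not} monotone along the inclusion order of the conditions: at $t^{\ast}$ a condition (\ref{EQ D}) can have larger slack than a condition (\ref{EQ B}) containing it, and a condition (\ref{EQ B}) can have larger slack than a larger (\ref{EQ B}) containing it. Consequently a naive uniform lowering of the variables of $\mathcal{C}_0$ destroys some of the enclosing conditions, which is precisely what forces the compensating increase. The verification then reduces to showing that a concrete finite family of linear inequalities in $c,N,\lambda,\mu$ has a common solution --- equivalently, that the windows above are nonempty for the chosen $c$ --- with the smallest values of $N$ checked by hand. This is the elementary but technical calculation carried out in Section~\ref{independenceproof}.
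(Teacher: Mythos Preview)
Your Step~1 is a genuinely different and efficient idea: the paper's proof in Section~\ref{independenceproof} does \emph{not} use any symmetry reduction --- it handles every condition directly via an explicit two-value assignment (all variables appearing in the chosen condition get one common value, all others a second value chosen so that the global average is $-\tfrac{2}{N+2}$), and then verifies the remaining inequalities by a straightforward case analysis in the dimension of the associated $Z_\ell$. Your observation that $S_N$, $\tau$, and the affine involution $\iota$ permute the collection of half-spaces is correct (and the verification that $\iota$ swaps (\ref{EQ B}) for $J$ with (\ref{EQ E}) for $J$ works out exactly as you indicate), and this would cut the work down to $O(N)$ representatives rather than $O(2^N)$.

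However, Step~2 is not a proof but a scheme, and it contains a concrete error. A point with all coordinates equal to $c$ ``just above $-\tfrac{2}{N+3}$'' does \emph{not} lie in all the half-spaces: the condition~(\ref{EQ E}) for $\underline{\infty}$ (that is, $d=0$) reads $\tfrac{N(N+3)}{2}\,c < -N$, i.e.\ $c < -\tfrac{2}{N+3}$, so your starting point already violates it; you need $c$ strictly inside the hypercube $(-\tfrac{2}{N+1},-\tfrac{2}{N+3})$ of Theorem~\ref{conditions}. More seriously, you never specify \emph{which} coordinates $s_{0j}$ are raised or show that the admissible windows for $\lambda$ and $\mu$ are actually nonempty --- you instead defer to ``the elementary but technical calculation carried out in Section~\ref{independenceproof}'', which is circular, since that section is precisely the proof you are meant to supply. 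The paper's argument avoids this by writing the witnesses down explicitly (for instance, for an affine condition at a $Z_j$ of dimension $k$: the $\binom{N+1-k}{2}$ variables appearing in that condition are set equal to $-\tfrac{2}{N+1-k}$, and the remaining ones to $-\tfrac{2}{N+2}\cdot\tfrac{(k+1)N+2k}{2(k+1)N-k(k-1)}$) and then checking each remaining inequality directly.
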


The challenge to prove Proposition \ref{independence} is to find such adequate points $Q$. Then the verification of the statement consists of various computations. We defer this technical issue to Section \ref{independenceproof}.

\medskip
We will determine the polar locus of the zeta function $Z_\varphi^{(N)}\left( D; \boldsymbol{s}\right)$ in terms of the embedded resolution $\pi$ of the associated affine or projective hyperplane arrangement. We recall some notation from Subsection \ref{resolution}:
\begin{itemize}
\item
$
f_N(x)=
\prod_{i=1}^{N} x_{i} \prod_{i=1}^{N} (  1-x_{i}) \prod_{1\leq i<j\leq N} (  x_{i}-x_{j})
$,
$A_N= f_N^{-1}(0)\subset \mathbb{R}^N$ is the induced affine hyperplane arrangement, and $\bar{A}_N\subset (\mathbb{P}_\mathbb{R}^1)^N$  the completed arrangement;
\item
$
Z_\varphi^{(N)} (D; \boldsymbol{s})  = \int_D \varphi(x) \prod_{i=1}^{N} \vert x_{i}\vert^{s_{0i}} \prod_{i=1}^{N}\vert 1-x_{i}\vert^{s_{i(N+1)}} 
\prod_{1\leq i<j\leq N} \vert x_{i}-x_{j}\vert^{s_{ij}} $

$\times   dx, $
where $\varphi:\mathbb{R}^N\to \mathbb{C}$ is smooth on $D$;

\item as integration domain $D$ we consider all possible (compact or noncompact) polyhedra with boundary conditions given by inequalities coming from components of $A_N$. Thus,  $D$ is given by conditions of the form
$x_i \geq 0$, $x_i \leq 0$, $x_i \geq 1$ or $x_i\leq 1$ for some indices $i$ and/or $x_i\geq x_j$ for some $i\neq j$, such that $\dim(D)=N$.
\end{itemize}
Also, in the mentioned section,  we  introduced the  embedded resolutions $X\rightarrow\mathbb{R}^{N}$ of $A_N$ and $\bar{X}\rightarrow (\mathbb{P}_\mathbb{R}^1)^{N}$ of $\bar{A}_N$. We will use the first one when $D$ is compact, and the second one when $D$ is noncompact.

When $D$ is noncompact, we rather consider its closure in $(\mathbb{P}_\mathbb{R}^1)^N$.  In that case, with the coordinates $z_i=1/x_i$ \lq at infinity\rq, the closure of $D$ is locally given \lq at infinity\rq\ by similar inequalities $z_i \geq 0$, $z_i \leq 0$, $z_i \leq 1$, $z_i \geq 1$ and $z_i \geq z_j$.  And then we must consider also \lq faces at infinity\rq.

For simplicity of notation, we will denote both embedded resolutions by $\pi$, the components of $\pi^{-1}(A_N)$ or $\pi^{-1}(\bar{A}_N)$ by $E_i, i\in T$, and, when $D$ is noncompact, we keep the notation $D$ for its closure in $(\mathbb{P}_\mathbb{R}^1)^N$.
 We also denote by $\tilde{D}$ the strict transform of $D$ in $X$ or $\bar{X}$, respectively, as well as at any stage of the blow-up process.

\begin{theorem}\label{maintheorem}
A subspace $Z_j$ contributes to the polar locus of $Z_\varphi^{(N)}\left( D; \boldsymbol{s}\right)$ if and only if $\dim(Z_j \cap D) = \dim (Z_j)$.
\end{theorem}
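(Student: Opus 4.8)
The plan is to analyze the contribution of each subspace $Z_j$ to the polar locus by tracing through the embedded resolution $\pi$, reducing—via a suitable partition of unity—to the local monomial integrals of Lemma \ref{Lemma0}, and then checking in which local charts the strict transform $\tilde D$ actually meets the divisor $E_j$. First I would recall that, after applying $\pi$ and localizing, the integral $Z_\varphi^{(N)}(D;\boldsymbol{s})$ becomes a finite sum of integrals whose integrand near a generic point $b\in E_j$ has the form $\varphi(y)\Phi(y;\boldsymbol{s})|y_1|^{\sum_j a_j s_j + b -1}\,dy$ as in (\ref{integrand}), with $E_j = \{y_1 = 0\}$. The linear form $L_j(\boldsymbol{s}) := \sum_j a_j s_j + b$ appearing here is exactly the one producing the hyperplane associated with $Z_j$ in Theorem \ref{conditions}, by the combinatorial bookkeeping recalled there (the numerical data $a_j, b$ depend only on the incidences of $Z_j$ with the components of $A_N$, hence are the same for every $D$).

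The key dichotomy is then the following. \emph{If} $\dim(Z_j\cap D) < \dim(Z_j)$, then the strict transform $\tilde D$ is disjoint from (the relevant part of) $E_j$—more precisely, in every chart where $E_j$ appears as $\{y_1=0\}$, the local integration domain coming from $\tilde D$ is contained in a set bounded away from $\{y_1 = 0\}$. Intuitively this is because $D$, being a polyhedron with faces supported on components of $A_N$, meets $Z_j$ (if at all) only along a proper face, and after blowing up the blow-up centres the exceptional divisor over that face gets \lq separated\rq\ from $\tilde D$ in the fiber direction; so in the monomial model $|y_1|$ stays bounded below on $\tilde D$, the integral over that piece is entire in $\boldsymbol{s}$, and $Z_j$ contributes no pole. \emph{Conversely, if} $\dim(Z_j\cap D) = \dim(Z_j)$, then $Z_j\cap D$ is full-dimensional inside $Z_j$, so there is an open subset of $Z_j$ lying in the interior of $D$ (relative to the ambient arrangement stratum); over such a point, $\tilde D$ locally looks like $\{y_1 \geq 0\}$ or the full space in the coordinates of Lemma \ref{Lemma0}, one can choose $\varphi$ (or just use the given $\varphi$, which is nonvanishing somewhere there, after possibly a sign/partition-of-unity argument) so that the hypothesis of Lemma \ref{Lemma0}(iii) is met, and therefore the hyperplane $L_j(\boldsymbol{s}) = 0$ genuinely occurs in the polar locus. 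Here one must also rule out that this pole is cancelled by contributions from other charts or other $E_i$; this is where Proposition \ref{independence} enters—since the hyperplane of $Z_j$ is not contained in the union of the hyperplanes of the other $Z_i$, a cancellation on a generic point of $\{L_j = 0\}$ is impossible, so the pole survives.

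Concretely I would carry out the argument in these steps. (1) Set up the partition of unity subordinate to a cover of $X$ (or $\bar X$) by resolution charts, so that $Z_\varphi^{(N)}(D;\boldsymbol{s}) = \sum_\ell Z_\ell(\boldsymbol{s})$ with each $Z_\ell$ a monomial integral as in Lemma \ref{Lemma0}, over a local domain that is either disjoint from all coordinate hyperplanes, or of the type $\{y_1\geq 0\}$, or the full $\mathbb{R}^N$. (2) Prove the \lq only if\rq\ direction: if $\dim(Z_j\cap D)<\dim(Z_j)$, show by an elementary incidence/blow-up analysis that in each chart where $E_j=\{y_1=0\}$, the support of the integrand (intersected with $\tilde D$) is contained in $\{|y_1| \geq \epsilon\}$ for some $\epsilon > 0$, hence no $Z_j$-pole; the point is that a blow-up along a centre meeting $D$ only in a lower-dimensional face produces an exceptional divisor whose generic point is not in $\tilde D$. (3) Prove the \lq if\rq\ direction: pick a generic point $b$ of $E_j$ lying over an interior point of $Z_j\cap D$, use Lemma \ref{Lemma0}(iii) to get that $\{L_j(\boldsymbol{s})=0\}$ is in the polar locus of that local term, and use Proposition \ref{independence} (independence of the $2^{N+2}-N-4$ hyperplanes, hence in particular that the $Z_j$-hyperplane is not covered by the others) to conclude no cancellation across the full sum. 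The main obstacle is step (2): making precise and fully general the claim that a proper-face incidence of $D$ with $Z_j$ forces $\tilde D$ to avoid the generic locus of $E_j$ in \emph{every} relevant chart of the iterated blow-up, uniformly over all admissible polyhedra $D$. This requires a careful inductive description of how $\tilde D$ sits relative to each successive exceptional divisor—essentially a combinatorial lemma about how faces of $D$ interact with the lattice of blow-up centres (\ref{centres})—and it is the part where one has to be genuinely careful rather than routine.
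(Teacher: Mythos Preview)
Your plan for the ``if'' direction matches the paper: one distinguishes the case $Z_j\cap\operatorname{Int}(D)\neq\emptyset$ from the case where $F_j:=Z_j\cap D$ is a face of $D$, then chooses coordinates so that $Z_j=\{y_1=\dots=y_r=0\}$ and $D$ is locally $\{y_1\ge 0,\dots,y_r\ge 0,\ L_1\ge 0,\dots,L_t\ge 0\}$, computes the blow-up in one chart to see $\tilde D=\{u_1\ge 0\}\cap(\text{conditions in }u_2,\dots,u_r)$, and then invokes Lemma~\ref{Lemma0}(iii) together with Proposition~\ref{independence} to rule out cancellation.

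For the ``only if'' direction, however, you are missing the paper's key simplification. You propose to track $\tilde D$ against $E_j$ through the entire iterated blow-up via an inductive/combinatorial analysis, and you correctly flag this as the hard step. The paper bypasses this completely with the following trick: instead of working with $Z_j$, one passes to the (strictly smaller) affine span $C_j$ of $F_j=Z_j\cap D$. One chooses coordinates so that $C_j=\{y_1=\dots=y_r=0\}$ and $D$ is locally cut out by inequalities in $y_1,\dots,y_r$ as before; then $Z_j\supsetneq C_j$ is given by linear equations $\sum_i a_i^{(m)} y_i=0$, and the statement $Z_j\cap D=F_j$ translates into an elementary implication (the equations together with the inequalities force $y_1=\dots=y_r=0$). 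Now one performs the \emph{single} blow-up along $C_j$. In the first chart, $\tilde D$ is given by $u_1\ge 0$ and inequalities in $u_2,\dots,u_r$, while $\tilde Z_j$ is given by the equations $a_1^{(m)}+\sum_{i\ge 2} a_i^{(m)} u_i=0$; plugging a hypothetical common point into the implication (with $y_1=1$) gives an immediate contradiction. Thus $\tilde Z_j\cap\tilde D=\emptyset$ already after this one blow-up, hence $E_j\cap\tilde D=\emptyset$ in the final resolution. If $C_j$ happens not to be one of the centres of $\pi$, one simply inserts this blow-up at the appropriate stage; the extra exceptional divisor cannot contribute because the full polar locus is already known to lie in the list of Theorem~\ref{conditions}. (In fact $C_j$ is always a centre of $\pi$, but this is a separate lemma and is not needed for the proof.)

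So your outline is not wrong, but the heavy inductive step you anticipate is unnecessary: a single well-chosen blow-up along the span of $Z_j\cap D$ separates everything at once.
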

The theorem follows from Propositions \ref{if} and \ref{only if} below. With \lq contributes to the polar locus\rq\ we mean that, when $L(\boldsymbol{s}) > b$ or $L(\boldsymbol{s}) < b$ is the condition in Theorem \ref{conditions} associated to $Z_j$ (with thus $L(\boldsymbol{s})$ a sum of some variables $s_{ij}$ and $b$ a negative integer), then the hyperplane given by $L(\boldsymbol{s}) = b$ in $\mathbb{C}^{\boldsymbol{d}}$ is part of the polar locus of $Z_\varphi^{(N)}\left( D; \boldsymbol{s}\right)$.

\begin{example}\label{exampleDelta2}\rm
In Figure 4, we indicated the domain $D=\Delta_2$ as the dotted area.  The intersections of the lines  $\{x_1=0\}$, $\{x_2=1\}$ and $\{x_1=x_2\}$ with $D$ have dimension $1$, so those $Z_j$ contribute to the polar locus. On the other hand, the intersections of $\{x_1=1\}$ and $\{x_2=1\}$ with $D$ have dimension $0$, so those $Z_j$ do not contribute.
The intersections of the points $\underline{0}$ and $\underline{1}$ with $D$ are of course those points themselves, hence they contribute to the polar locus.
\end{example}

\medskip
\centerline{
\beginpicture
\setcoordinatesystem units <.35truecm,.35truecm>

\putrule from -2 0 to 12 0
\putrule from -2 10 to 12 10
\putrule from 0 -2 to 0 12
\putrule from 10 -2 to 10 12
\setlinear    \plot  -2 -2   12 12  /

\setlinear
\vshade 0 0 10    10 10 10 /


\put {$x_1=x_2$} at 5.5 4.5
\put {$x_1=0$} at -2 5
\put {$x_1=1$} at 12 5
\put {$x_2=0$} at 5 -1
\put {$x_2=1$} at 5 11
\put{$\bullet$} at 0 0
\put{$\bullet$} at 10 10
\put{$\underline{0}$} at -0.7 0.7
\put{$\underline{1}$} at 10.7 9.3

\put{Figure 4} at 5 -3.5

\setcoordinatesystem units <.35truecm,.35truecm> point at -20 0

\putrule from -2 0 to 12 0
\putrule from 0 -2 to 0 12

\putrule from -2 6 to 12 6
\putrule from 6 -2 to 6 12

\setlinear    \plot  -1.8 -1.8   12 12  /

\setdashes
\putrule from 10 -2 to 10 12
\putrule from -2 10 to 12 10

\setsolid

\setlinear
\vshade 0 0 10    10 10 10 /

\put {$z_1=0$} at 12 4
\put {$z_2=0$} at 3 10.9
\put{$\bullet$} at 0 0
\put{$\bullet$} at 6 6
\put{$\bullet$} at 10 10
\put{$\underline{0}$} at -0.7 0.7
\put{$\underline{1}$} at 6.7 5.3
\put{$\underline{\infty}$} at 10.8 9.3

\put{Figure 5} at 5 -3.5
\endpicture
}

\medskip
\begin{example}\rm
In Figure 5, we indicate (the closure in $(\mathbb{P}_\mathbb{R}^1)^2$ of) the unbounded domain $D=\{(x_1,x_2) \in \mathbb{R}^2 \mid 0\leq x_1 \leq x_2\}$ as the dotted area.
Note that the condition $0\leq x_1 \leq x_2$ induces the condition $0\leq z_2 \leq z_1$.

In addition to the contributions listed in   Example \ref{exampleDelta2}, here also the lines $\{x_1=1\}$, and $\{z_2=0\}$, and the point $\underline{\infty}$ contribute.
In other words, from the list of $10$ subspaces $Z_j$ in  $(\mathbb{P}_\mathbb{R}^1)^2$, only the lines $\{x_2=0\}$ and $\{z_1=0\}$  {\em do not} contribute.
\end{example}

\medskip

\begin{proposition}\label{if}
If $\dim(Z_j \cap D) = \dim (Z_j)$, then $Z_j$ contributes to the polar locus of $Z_\varphi^{(N)}\left( D; \boldsymbol{s}\right)$.
\end{proposition}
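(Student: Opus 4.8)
The plan is to work locally near $Z_j$ (or rather near its strict transform $E_j$, with $j\in T_s$, respectively near the exceptional divisor $E_j$, $j\in T_e$, lying over the centre $Z_j$) and exhibit an explicit point of the hyperplane $L(\boldsymbol{s})=b$ where the integrand, pulled back by $\pi$, genuinely produces a pole. First I would recall from Subsection \ref{zetafunctions} that, in a generic point $b$ of $E_j$, with local coordinates $y=(y_1,\dots,y_N)$ in which $E_j=\{y_1=0\}$, the pullback $\pi^*\bigl(\varphi(x)\prod_i|x_i|^{s_{0i}}\prod_i|1-x_i|^{s_{i(N+1)}}\prod_{i<j}|x_i-x_j|^{s_{ij}}\,dx\bigr)$ takes exactly the shape
\[
\varphi(y)\,\Phi(y;\boldsymbol{s})\,|y_1|^{L(\boldsymbol{s})+(N-k)-1}\,dy
\]
of Lemma \ref{Lemma0}, where $L(\boldsymbol{s})+(N-k)=0$ is precisely the condition in Theorem \ref{conditions} attached to $Z_j$ (the coefficient $N-k$, or $d+\cdots$ for the faces at infinity, being the relevant negative integer $-b$). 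Here $\Phi$ is smooth and non-vanishing and holomorphic in $\boldsymbol{s}$, and $\varphi$ — possibly after multiplying by an element of a partition of unity subordinate to a suitable cover, as in Subsection \ref{zetafunctions} — is smooth with compact support.

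The key point is the hypothesis $\dim(Z_j\cap D)=\dim(Z_j)$: this guarantees that the strict transform $\tilde D$ of $D$ meets $E_j$ in (an open subset of) a set of full dimension $N-1$ in $E_j$. Concretely, since $Z_j$ is an intersection of components of $A_N$ (or of $\bar A_N$, or a coordinate subspace at infinity), and $D$ is cut out by inequalities among those same components, the condition $\dim(Z_j\cap D)=\dim Z_j$ means that at a generic point of $Z_j\cap D$ the inequalities defining $D$ that are \emph{not} automatically satisfied on $Z_j$ are satisfied \emph{strictly}. Blowing up a centre contained in $Z_j\cap D$ (respectively taking the strict transform) does not destroy this: one checks, by tracking the blow-up charts as in Subsection \ref{resolution}, that $\tilde D$ near a generic point $b$ of $E_j$ is locally of the form $\{y_1\geq 0,\ (y_2,\dots,y_N)\in U\}$ (or $\{(y_2,\dots,y_N)\in U\}$ with $E_j$ in the interior along the $y_1$-direction) for some open $U\subset\mathbb{R}^{N-1}$ — i.e., exactly one of the two local model domains $\{y_1\geq 0\}$ or $\mathbb{R}^N$ appearing in Lemma \ref{Lemma0}. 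Thus near $b$ the relevant piece of $Z_\varphi^{(N)}(D;\boldsymbol{s})$ is literally an integral $J_+(\boldsymbol{s})$ or $J(\boldsymbol{s})$ of that lemma.

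By Lemma \ref{Lemma0}(iii), viewed as a distribution in $\varphi$, the hyperplane $L(\boldsymbol{s})+(N-k)=0$ belongs to the polar locus of this local integral; equivalently, there is a test function $\varphi_0$, supported in an arbitrarily small neighbourhood of $b$, for which the residue along that hyperplane is nonzero (this is where one uses that $-(N-k)$, $-1$, etc. is actually $b$ and not $b$ shifted by a positive integer: it is the \emph{leading} pole, with residue $\tfrac{1}{|a_1|}\int\Phi(0,y';\boldsymbol{s})\varphi_0(0,y')\,dy'$, which one can make nonzero by choosing $\varphi_0\geq 0$ not identically zero on $E_j\cap\tilde D$). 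Since changing $\varphi$ changes $Z_\varphi^{(N)}(D;\boldsymbol{s})$, and since the other contributions to the global zeta function (from other $E_i$, and from the locus $y_1\neq 0$) are holomorphic along a generic point of this hyperplane or contribute only to strictly smaller poles, no cancellation can occur for generic $\varphi$ — hence the hyperplane is in the polar locus of $Z_\varphi^{(N)}(D;\boldsymbol{s})$, proving the proposition. The main obstacle is the second step: verifying carefully, for each of the three types of centres in (\ref{centres}) and for the centres at infinity, that the hypothesis $\dim(Z_j\cap D)=\dim Z_j$ really does propagate through the successive blow-ups so that $\tilde D$ acquires one of the two clean local shapes of Lemma \ref{Lemma0} at a generic point of $E_j$ — in particular that no further inequality of $D$ degenerates to an equality along $E_j\cap\tilde D$, which would make the local domain a proper lower-dimensional piece and kill the contribution.
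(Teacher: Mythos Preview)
Your strategy is essentially the paper's: pass to local coordinates on the resolution near a generic point of $E_j$, show that $\tilde D$ there has the product form $\{u_1\geq 0\}\times(\text{region in the other variables})$ (or the full $\mathbb{R}^N$ form when $Z_j$ meets the interior of $D$), and invoke Lemma~\ref{Lemma0}. The paper resolves what you correctly identify as the main obstacle by a single clean computation rather than a case-by-case check over the types in (\ref{centres}): after a linear change of variables putting a general point of the face $F_j=Z_j\cap D$ at the origin, $Z_j$ becomes $\{y_1=\dots=y_r=0\}$ and $D$ is locally $\{y_1\geq 0,\dots,y_r\geq 0,\ L_1(y_1,\dots,y_r)\geq 0,\dots,L_t\geq 0\}$; the first blow-up chart then gives $\tilde D=\{u_1\geq 0,\ u_2\geq 0,\dots,u_r\geq 0,\ L_1(1,u_2,\dots,u_r)\geq 0,\dots\}$, so $u_1$ is cleanly separated. (Your ``open $U$'' is really a closed cone with nonempty interior, but this is harmless for Lemma~\ref{Lemma0}.)

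There is one genuine difference in the concluding step. To rule out global cancellation of the local pole, the paper invokes Proposition~\ref{independence} (independence of the convergence half-spaces), which pins down an explicit real point where only condition $j$ fails. You instead argue via a generic point on the hyperplane $\{L_j(\boldsymbol{s})=b_j\}$, using only that the candidate polar hyperplanes attached to the various $E_i$ are pairwise distinct --- which is immediate from the explicit list in Theorem~\ref{conditions}. Your route is valid and bypasses the computations of Section~\ref{independenceproof}; the paper's route, in exchange, yields the stronger geometric statement that each hyperplane is actually a facet of the convergence region, which feeds into the Gamma-function description in Remark~\ref{Gamma}.
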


\begin{proof}
We first consider the easy case when $Z_j\cap \operatorname{Int}(D) \neq \emptyset$.  In such a case, $E_j \cap \operatorname{Int}(\tilde{D}) \neq \emptyset$, and $E_j$ contributes to the polar locus by Lemma \ref{Lemma0}.
From now on, we assume that $F_j:= Z_j \cap D$ is a face of $D$ and set $r:= \operatorname{codim}(Z_j)=\operatorname{codim} (F_j)$. In particular, if $Z_j$ is a component of $A_N$ or $\bar{A}_N$, then $F_j$ is a facet of $D$, and, if $E_j$ is exceptional, then $F_j$ is a face of $D$ of codimension at least 2.

\smallskip
Using the blow-up formula, we will describe $E_j \cap \tilde{D}$ in local coordinates. Recall that blowing up is a geometric notion, independent of the choice of coordinates to compute it. For our purposes, it is useful to perform  a change of coordinates $x\mapsto y$  in $\mathbb{R}^N$  or $(\mathbb{P}_\mathbb{R}^1)^N$, such that the new origin is a \emph{general} point $P$ of $F_j$ (which is of course just $F_j$ if $F_j$ is a point), and such that

(1) $Z_j$ is given by the equalities $y_1=\dots=y_r=0$;

(2)  $D$ is given (locally around $P$) by the inequalities
\begin{equation}\label{local D}
\left\{
\begin{array}
[c]{l}
y_1 \geq 0 \\
\cdots\\
y_r \geq 0 \\
L_1(y_1,\dots,y_r)\geq 0 \\
\cdots \\
L_t(y_1,\dots,y_r)\geq 0.
\end{array}
\right.
\end{equation}
Here $t\geq 0$ and the $L_i(y)$ are linear forms in $y_1,\dots,y_r$. If $r=1$ or $r=2$,  one can always find coordinates $y$ such that not all $L_i$ are needed. But for $r\geq 3$, they may be necessary; see, for instance, Example \ref{example L}.

\smallskip
Let $u_1, u_2, \dots, u_N$ be the coordinates in the \lq first chart\rq\ of the blow-up of $Z_j$; then $u$ and $y$ are related by
\begin{equation}\label{lblowup}
\left\{
\begin{array}
[c]{l}
y_1    = u_1\\
y_i   = u_1u_i  \,(2\leq i \leq r) \\
y_\ell  = u_\ell \, (r < \ell \leq N),
\end{array}
\right.
\end{equation}
and $E_j$ is given in this chart by $u_1=0$.
Furthermore, here $\tilde{D}$ is given by
\begin{equation}\label{local Dstrictj}
\left\{
\begin{array}
[c]{l}
u_1 \geq 0 \\
u_1u_2 \geq 0 \\
\cdots\\
u_1u_r \geq 0 \\
L_1(u_1,u_1u_2,\dots,u_1u_r)\geq 0 \\
\cdots \\
L_t(u_1,u_1u_2,\dots,u_1u_r)\geq 0
\end{array}
\right.
\qquad \Leftrightarrow \qquad
\left\{
\begin{array}
[c]{l}
u_1 \geq 0 \\
u_2 \geq 0 \\
\cdots\\
u_r \geq 0 \\
L_1(1,u_2,\dots,u_r)\geq 0 \\
\cdots \\
L_t(1,u_2,\dots,u_r)\geq 0.
\end{array}
\right.
\end{equation}
Note that $\tilde{D}$ has positive measure, since $D$ has positive measure. (When $t=0$ above, this is also clear by the concrete description in (\ref{local Dstrictj}).)

The essential information in the above description is that, while (in that chart) $E_j$ is given by $u_1=0$, the domain $\tilde{D}$ is given by $u_1\geq0$ and other inequalities involving only \emph{other} variables.
So we can conclude by Lemma \ref{Lemma0} and Proposition \ref{independence}.
\end{proof}

\begin{example}\label{example L}\rm
Take $N=3$ and let $D$ be the (noncompact) domain given by the inequalities $x_1\geq 0$, $x_2\geq 0$, $x_1\geq x_3$ and $x_2\geq x_3$. In particular, locally around $\underline{0}$, we need all four inequalities to describe $D$. We perform the linear change of coordinates
\begin{equation}
\left\{
\begin{array}
[c]{l}
y_1 =x_1   \\
y_2 = x_2   \\
y_3 = x_1-x_3.
\end{array}
\right.
\end{equation}
Then in the $y$-coordinates $D$ is given by the inequalities $y_1\geq 0$, $y_2\geq 0$, $y_3\geq 0$ and $-y_1+y_2+y_3\geq 0$.
\end{example}

Below, we will need to consider subspaces generated by faces of $D$. When $D\subset \mathbb{R}^N$ is compact, each face $F$ of $D$ is in $\mathbb{R}^N$ and we mean \emph{affine} subspace generated by $F$. Otherwise, the compactified $D$ and its faces $F$ lives in $(\mathbb{P}_\mathbb{R}^1)^N$, and we mean \emph{projective} subspace generated by $F$.

\begin{proposition}\label{only if}
If $\operatorname{dim}(Z_j\cap D) < \operatorname{dim}(Z_j)$, then $Z_j$ does not contribute to the polar locus of $Z_\varphi^{(N)}\left(D;  \boldsymbol{s}\right)$.
\end{proposition}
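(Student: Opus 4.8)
The plan is to show that, when $\dim(Z_j\cap D)<\dim(Z_j)$, the local analysis of the integrand near $E_j$ via the resolution $\pi$ never produces a pole from $E_j$, because on the strict transform $\tilde D$ the exceptional variety $E_j$ is essentially \emph{not met}, or is met only along a lower-dimensional piece that contributes nothing. First I would reduce to a local computation exactly as in the proof of Proposition \ref{if}: pick a general point $P$ of the face $F_j:=Z_j\cap D$, and perform a linear change of coordinates $x\mapsto y$ adapted to $P$. The key difference with the situation of Proposition \ref{if} is that now $\operatorname{codim}(F_j)>\operatorname{codim}(Z_j)=:r$, so $F_j$ does \emph{not} span the $r$ extra normal directions of $Z_j$; equivalently, in suitable coordinates $D$ is locally cut out by inequalities among $y_1,\dots,y_r$ (plus variables transverse to $Z_j$), but these inequalities force $(y_1,\dots,y_r)$ to lie in a proper sub-cone of positive \emph{codimension} in $\{y_1\geq 0,\dots,y_r\geq 0\}$. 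A clean way to encode this: after the change of coordinates one can arrange that the only points of $D$ near $P$ lying on $Z_j$ are those of $F_j$, and $F_j$ spans a coordinate subspace missing at least one of the directions $y_1,\dots,y_r$.

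Next I would run the blow-up of $Z_j$ in the chart (\ref{lblowup}) and examine where $\tilde D$ sits relative to $E_j=\{u_1=0\}$. The point is that the extra inequalities describing $D$ translate, after dividing by the appropriate power of $u_1$ as in (\ref{local Dstrictj}), into conditions on $u_2,\dots,u_r$ (and the transverse variables) that are \emph{incompatible with $u_1=0$ together with positive measure in the $E_j$-direction}; more precisely, the trace of $\tilde D$ on a neighbourhood of a general point of $E_j$ either is empty or is contained in a set of the form $\{u_1\geq 0\}$ intersected with further inequalities that, on $u_1=0$, cut out something of codimension $\geq 1$ \emph{inside} $E_j\cap\tilde D$ — so that, after the remaining blow-ups of the resolution, no component lying over $Z_j$ meets the interior of the final strict transform of $D$. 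Concretely, I expect to show: at a general point of $E_j$, the local integration domain for $\tilde D$ is disjoint from $\{u_1=0\}$ (it looks like \lq some set not meeting $\{u_1=0\}$\rq\ in the language of Remark after Lemma \ref{Lemma0}), and therefore the pullback of the integrand near such a point is smooth in $u_1$ and the parameters $\boldsymbol s$ restricted to $\tilde D$, contributing no pole. One must also treat the finitely many special points of $E_j$ — those lying on other components $E_k$ — but by the normal-crossings property the local picture there is again monomial, and the $E_j$-variable $u_1$ appears with an exponent that, on $\tilde D$, is bounded away from the critical value because of the same codimension obstruction; so those points contribute nothing new to the polar locus beyond what the other $E_k$ with $\dim(Z_k\cap D)=\dim(Z_k)$ already give via Proposition \ref{if}.

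Finally I would assemble these local statements into the global conclusion. Since $\pi$ is a composition of blow-ups in the centres $Z_j$, the condition associated to $Z_j$ in Theorem \ref{conditions} can be violated by $\boldsymbol s$ only if the corresponding exceptional $E_j$ meets $\operatorname{Int}(\tilde D)$ in positive measure at a \emph{generic} point (this is exactly the \lq{}only if\rq{} direction of the criterion in Lemma \ref{Lemma0}); the local analysis above shows this fails whenever $\dim(Z_j\cap D)<\dim(Z_j)$. I would phrase this as: the meromorphic continuation of $Z_\varphi^{(N)}(D;\boldsymbol s)$ is, by the resolution and a partition of unity, a finite sum of integrals of the form (\ref{integrand}) over pieces of $\tilde D$; each such piece that could witness the pole along $L(\boldsymbol s)=b$ associated to $Z_j$ lies over a neighbourhood of a general point of $E_j$, and we have shown such pieces do not meet $\{u_1=0\}$ with positive measure. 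Hence $L(\boldsymbol s)=b$ is holomorphic for $Z_\varphi^{(N)}(D;\boldsymbol s)$, i.e.\ $Z_j$ does not contribute.

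The main obstacle I anticipate is the bookkeeping at the non-generic points of $E_j$: after blowing up $Z_j$ one still has to blow up the higher-dimensional centres, and one must make sure that no \emph{later} exceptional divisor lying over $Z_j$ sneaks a pole back in. The cleanest resolution of this difficulty is probably to argue directly on the final space $X$ or $\bar X$, using that, for any point $b$ of the final strict transform $\tilde D$ lying over $Z_j$, the exponent attached to the $E_j$-direction in the monomialized integrand, combined with the local shape of $\tilde D$ at $b$, never achieves the value $-1$ that Lemma \ref{Lemma0}(iii) requires to force a pole — precisely because the codimension defect $\dim(Z_j)-\dim(Z_j\cap D)\geq 1$ means $\tilde D$ near $b$ avoids the hyperplane $E_j$ in at least one coordinate direction. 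Making the \lq never achieves $-1$\rq\ statement precise (including the case where $b$ lies on several $E_k$'s and the relevant linear form is a combination of several exponents) is where the real work sits, but it is a finite, geometric verification parallel to the one in Section \ref{independenceproof}.
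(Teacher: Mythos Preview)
Your overall strategy---work locally near a general point of $F_j=Z_j\cap D$ and follow the strict transform of $D$ through a blow-up---is the right shape, but you choose the wrong blow-up centre, and this is exactly why the argument degenerates into the ``bookkeeping'' problem you flag at the end and do not resolve.

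You set $r=\operatorname{codim}(Z_j)$ and blow up $Z_j$. The paper instead lets $C_j$ be the (affine or projective) subspace \emph{spanned by} $F_j$, sets $r=\operatorname{codim}(C_j)=\operatorname{codim}(F_j)>\operatorname{codim}(Z_j)$, and blows up $C_j$. Since $\dim C_j<\dim Z_j$, this blow-up occurs \emph{earlier} in the sequence defining $\pi$ (and in fact $C_j$ is always already one of the centres, by Lemma~\ref{blowupcentre}). The payoff of using $C_j$ is a clean disjointness: in normal coordinates $y_1,\dots,y_r$ for $C_j$, the subspace $Z_j\supsetneq C_j$ is cut out by equations $\sum_i a_i^{(m)}y_i=0$, and the hypothesis $Z_j\cap D=F_j\subset C_j$ becomes the implication ``those equations together with the inequalities for $D$ force $y_1=\dots=y_r=0$.'' After the blow-up of $C_j$, the strict transform of $Z_j$ is given by $a_1^{(m)}+\sum_{i\ge 2}a_i^{(m)}u_i=0$, and substituting $(1,u_2,\dots,u_r)$ into the implication yields $1=0$. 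Hence $\tilde Z_j\cap\tilde D=\emptyset$ already at this early stage; both sets are closed, all later blow-ups preserve disjointness, and so $E_j$ never meets the final $\tilde D$. There is simply nothing to integrate near $E_j$.

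Blowing up $Z_j$ instead does \emph{not} separate: typically $E_j\cap\tilde D$ is nonempty of dimension $<N-1$ (try $N=3$, $D=\Delta_3$, $Z_j=\{x_1=x_3=0\}$), and you are left to argue that this lower-dimensional contact is harmless after the remaining blow-ups. Your proposed fix---that ``the exponent attached to the $E_j$-direction never achieves the value $-1$''---is not correct: that exponent is $\sum_i a_i s_i + b-1$, depending only on $E_j$, and certainly equals $-1$ for suitable $\boldsymbol s$; what can prevent the pole is only the \emph{shape} of $\tilde D$ relative to $E_j$, not the exponent. There is also a slip earlier: with your $r=\operatorname{codim}(Z_j)$, the inequalities cutting out $D$ near $P$ generally couple the normal variables $y_1,\dots,y_r$ with the \emph{tangential} ones $y_{r+1},\dots,y_N$, so they do not force $(y_1,\dots,y_r)$ into a sub-cone of positive codimension as you assert. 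Passing to $C_j$ is precisely what decouples these directions and makes the one-line contradiction work; without it, the case where $Z_j$ is itself a hyperplane ($r=1$, no blow-up at all) is not even covered by your outline.
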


\begin{proof}
We denote  $F_j:= Z_j \cap D$. We assume that $F_j\neq\emptyset$, since otherwise the conclusion is obvious. Now we need also the  subspace $C_j$ generated by $F_j$, and we denote $r:=\operatorname{codim}(C_j)=\operatorname{codim}(F_j)$.  Note that in this case $r > \operatorname{codim}(Z_j)$.
We perform the same  change of coordinates $x\mapsto y$  as above such that the new origin is a \emph{general} point $P$ of $F_j$ and such that

(1) $C_j$ is given by the equalities $y_1=\dots=y_r=0$;

(2)  $D$ is given (locally around $P$) by the inequalities (\ref{local D}).

\noindent
But now  $Z_j$ \emph{strictly contains} $C_j$. In any case, it is given by a finite set of linear equations in $y_1,\dots, y_r$, say
$$\sum_{i=1}^r a_i^{(m)} y_i =0,  \qquad \text{for } m\in M.$$
The fact that $Z_j \cap D$ is precisely $F_j$ can be formulated as the following implication.

\emph{(*) If $(y_1,\dots, y_r)$ satisfies both
\begin{equation}\label{condition}
\left\{
\begin{array}
[c]{l}
\sum_{i=1}^r a_i^{(m)} y_i =0,  \quad \text{for } m \in M ,\\
y_1\geq 0,\dots,y_r\geq 0, L_1(y_1,\dots,y_r)\geq 0, \dots, L_t(y_1,\dots,y_r)\geq 0,
\end{array}
\right.
\end{equation}
then  $y_1 = \dots = y_r=0$.}

\smallskip
We will consider the blow-up of $C_j$. In case this blow-up is {\em not} one of the blow-ups of our embedded resolution $\pi$, we extend $\pi$ with this blow-up, at the stage when centres of $\operatorname{dim}(C_j)$ are handled.  Important to note here is that the a priori extra candidate polar hyperplane induced by (the exceptional component of) this blow-up in fact {\em does not} contribute to the actual polar locus of  $Z_\varphi^{(N)}\left(D;  \boldsymbol{s}\right)$, since we know that the polar locus is included in the list induced by Theorem \ref{conditions}.

In fact, the previous argument turns out to be unnecessary, since  $C_j$ is really a centre of blow-up of $\pi$. Although not needed for the present proof, this result is of independent interest, which we show for the interested reader in Lemma \ref{blowupcentre} below.

\smallskip
Let $X'$ denote the ambient space after the blow-up with centre $C_j$.
As above, this blow-up is described in the first chart by (\ref{lblowup}) and the strict transform  $\tilde{D}$ is given by (\ref{local Dstrictj}).
Now the strict transform $\tilde{Z_j}$ of $Z_j$ in $X'$ is given by the equalities
$$a_1^{(m)}+\sum_{i=2}^r a_i^{(m)} u_i =0,  \qquad \text{for }m \in M.$$

We claim that  $\tilde{Z_j}$ and $\tilde{D}$ are disjoint in $X'$.
Indeed, suppose that
\begin{equation*}
(u_{1},u_{2},\dots ,u_{N})\in \tilde{Z_{j}}\cap \tilde{D}.
\end{equation*}
Then $(1,u_2,\dots,u_N)$ satisfies (\ref{condition}).  But then the implication (*) yields that
$1=u_2=\dots = u_N=0$, providing a contradiction.

Since at this stage of the embedded resolution algorithm the strict transforms of $Z_j$ and $D$ are already disjoint, certainly also $E_j$ and $\tilde{D}$ are disjoint (and both closed) in $X$ or $\bar{X}$.  Hence $E_j$ does not contribute to the polar locus.
\end{proof}

\begin{lemma}\label{blowupcentre}
Let $E_j$ be an exceptional component or strict transform in the embedded resolution $\pi$  for which $Z_j\cap D\neq\emptyset$ and $\dim(Z_j \cap D) < \dim (Z_j)$. Then the subspace generated by $Z_j \cap D$ is the centre of blow-up in $\pi$.
\end{lemma}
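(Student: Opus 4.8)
The plan is to identify explicitly the subspace $C_j$ generated by $F_j := Z_j\cap D$ and to show that it appears among the blow-up centres listed in \eqref{centres} (or among their analogues at infinity), thus directly exhibiting it as one of the centres of $\pi$. First I would recall that $Z_j$ is an intersection of some of the coordinate-type hyperplanes of $A_N$ (or $\bar{A}_N$), hence is itself an affine (or projective) coordinate-type subspace: it is cut out by equations among $\{x_i=0\}$, $\{x_i=1\}$, $\{x_i=x_j\}$ (and their counterparts $\{z_i=0\}$, $\{z_i=1\}$, $\{z_i=z_j\}$ at infinity). The domain $D$ is itself a polyhedron whose facets come from such hyperplanes, so any face $F_j$ of $D$ lies on an intersection of finitely many of these, and the affine/projective span $C_j$ of $F_j$ is again an intersection of coordinate-type hyperplanes. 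The content of the lemma is then that this span is \emph{dense} in the sense of the remark following the examples, i.e. that $C_j$ is one of the subspaces actually blown up in the economic resolution, equivalently that $C_j$ belongs to the list \eqref{centres} (up to permutation of coordinates) together with the $\underline{\infty}$-analogues and the original components.

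The key step is a combinatorial characterization of which coordinate-type subspaces $C$ occur as centres: these are exactly the intersections that contain one of the special points $\underline{0}$, $\underline{1}$, $\underline{\infty}$, together with the components of the arrangement themselves (which are handled as strict transforms). So I would argue: since $F_j\subsetneq Z_j$ as a consequence of $\dim(Z_j\cap D)<\dim(Z_j)$, the span $C_j$ is a proper coordinate-type subspace of $Z_j$, obtained by imposing \emph{additional} equations of the form $x_i=0$, $x_i=1$, $x_i=x_j$ (possibly at infinity) beyond those defining $Z_j$. The crucial observation is that $F_j$ is a face of $D$, and $D$ is contained in one of the chambers; tracking which sign conditions $y_i\geq 0$ (in the local coordinates of the proof of Proposition~\ref{only if}) become equalities on $F_j$, together with the implication (*) proved there, forces the extra hyperplanes through $C_j$ to be ``nested'' in a way that pins $C_j$ to contain one of $\underline{0},\underline{1},\underline{\infty}$ (or to be a component). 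Concretely, one shows that the linear forms whose vanishing cuts $C_j$ out of $Z_j$, being restrictions to $F_j$ of the defining inequalities of $D$, can all be simultaneously expressed in terms of the same ``block'' of coordinates, which is precisely the combinatorial shape of the centres in \eqref{centres}.

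The main obstacle I anticipate is exactly this last point: verifying that the span of a face of an arbitrary admissible polyhedron $D$ is always of the prescribed block form $\{x_{i_1}=\dots=x_{i_r}=0\}$, $\{x_{i_1}=\dots=x_{i_r}=1\}$, $\{x_{i_1}=\dots=x_{i_r}\}$ (or a product/infinity version), rather than some more general coordinate subspace like $\{x_1=0,\ x_2=x_3\}$ which is \emph{not} a centre in the economic resolution. This requires genuinely using that $D$ is a full-dimensional polyhedron whose \emph{inequalities} (not just the hyperplanes through a generic face) come from the arrangement, so that a face where several of these inequalities degenerate to equalities must have its active equalities forming a compatible ``chain''. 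I would handle this by a careful case analysis on the three types of defining inequalities of $D$ and on how they can become tight on $F_j$: for a face $F_j$ of $D$, the set of tight inequalities defines $C_j$, and one checks directly that any incompatible combination (e.g. $x_1=0$ and $x_2=x_3$ with $x_2\ne x_1$ somewhere on $D$) would either contradict $\dim D=N$ or would mean $F_j$ is not a face of $D$ but rather of lower dimension than claimed, or would still allow us to pass to a larger block. If a stubborn residual case resists, the fallback (already noted in the proof of Proposition~\ref{only if}) is that we do not strictly need the centre to be in the economic list: we may enlarge $\pi$ by blowing up $C_j$ at the appropriate stage, and the a priori extra polar hyperplane does not actually contribute because the polar locus is already contained in the list from Theorem~\ref{conditions}; but since the lemma asserts the sharper statement, the real work is the combinatorial verification just described.
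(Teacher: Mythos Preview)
Your proposal correctly isolates the crux --- ruling out that $C_j$ has ``mixed'' shape such as $\{x_1=0,\ x_2=x_3\}$ --- but the argument you sketch for this step is both vague and built on an incorrect premise. You characterize the centres as ``exactly the intersections that contain one of the special points $\underline{0}$, $\underline{1}$, $\underline{\infty}$'', yet $\{x_1=0,\ x_2=x_3\}$ contains $\underline{0}$ and is \emph{not} a centre (as you yourself note a few lines later). The correct description of the centres is the explicit list \eqref{centres}, equivalently the dense edges of the arrangement; since your stated criterion is wrong, the plan to ``pin $C_j$ to contain one of $\underline{0},\underline{1},\underline{\infty}$'' cannot close the argument, and the promised case analysis never gets beyond listing hoped-for outcomes.

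The paper's proof supplies the missing mechanism. Assuming $C_j$ is not a centre, one writes $C_j=C_1\cap C_2$ where $C_1$ is the maximal block of the same type as $Z_j$ (so that $Z_j\supset C_1\supsetneq C_j$) and $C_2$ is cut out by the remaining equations, which then necessarily involve a \emph{disjoint} set of variables $x_{n+1},\dots,x_N$. Locally near $C_j$ one correspondingly splits $D=D_1\cap D_2$ into pieces in disjoint variables and uses that codimension is additive for intersections of such sets. A short chain of equalities yields $\operatorname{codim}(C_i\cap D)=\operatorname{codim}(C_i)$ for $i=1,2$; combined with the observation $Z_j\cap D=C_1\cap D=C_j\cap D$ (all three equal $F_j$, since $Z_j\supset C_1\supset C_j$), one deduces $\operatorname{codim}(C_2\cap D_2)=0$ and hence $\operatorname{codim}(C_2)=0$, a contradiction. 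This disjoint-variable decomposition plus codimension additivity is the idea your sketch lacks; the alternatives you list (contradicting $\dim D=N$, or $F_j$ failing to be a face) are not how the contradiction actually arises.
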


\begin{proof}
 Without loss of generality, we may assume (after possibly a coordinate change in $\mathbb{P}_{\mathbb{R}}^1$ and a permutation of the coordinates)  that $Z_j=\pi(E_j)$ is given either by
$x_1=\dots=x_r=0$ ($1\leq r\leq N-1$) or $x_1=\dots=x_r$ ($2\leq r\leq N$).
We denote as before $F_j:=Z_j \cap D$ and $C_j$ for the affine subspace generated by $F_j$.  We have that $Z_j \supsetneq C_j$, $Z_j\cap D = C_j\cap D = F_j$ and hence $\operatorname{codim}(Z_j) < \operatorname{codim}(C_j) =\operatorname{codim}(F_j)$.

\smallskip
Suppose that $C_j$ is \emph{not} a centre of blow-up in $\pi$.  Then necessarily $C_j$ is given by equations

\noindent
(1) $x_1=\dots=x_m=x_{m+1}=\dots=x_n=0$  or $x_1=\dots=x_m=x_{m+1}=\dots=x_n$ ($m\leq n$), and

\noindent
(2) one or more other equations of the form $x_j=1$ or $x_j=x_k$ involving variables $x_j,x_k$ ($k>j>n$).

We view $C_j$ as $C_1 \cap C_2$, where $C_1$ and $C_2$ are the subspaces given by the equations (1) and (2), respectively.  Then (locally around $C_j$) the domain $D$ can also be described as $D_1 \cap D_2$, where $D_1$ and $D_2$ are given by inequalities  involving only the variables $x_1, \dots, x_n$ and only the variables $x_{n+1}, \dots, x_N$, respectively. Note that thus $Z_j \supset C_1 \supsetneq C_j$.

\smallskip
\noindent
{\sc Claim.} We have that $\operatorname{codim}(C_i) = \operatorname{codim}(C_i\cap D)$ for $i=1,2$.

\smallskip\noindent
We first show the claim, using

(i) $\operatorname{codim}(D)=\operatorname{codim}(D_1)=\operatorname{codim}(D_2)=0$;

(i) the fact that, when subsets $S_1$ and $S_2$ of $\mathbb{R}^N$ are given by equations or inequalities \emph{in disjoint variables}, then $\operatorname{codim}(A\cap B) = \operatorname{codim}(A) + \operatorname{codim}(B)$.

We have that
\begin{align*}
\operatorname{codim}(F_j)&=\operatorname{codim}(C_j \cap D) = \operatorname{codim}(C_j )=\operatorname{codim}(C_1)+\operatorname{codim}(C_2)   \\
&\leq \operatorname{codim}(C_1\cap D)+\operatorname{codim}(C_2\cap D) \\
&= \operatorname{codim}((C_1\cap D_1) \cap D_2)+\operatorname{codim}((C_2\cap D_2)\cap D_1) \\
&= \operatorname{codim}(C_1\cap D_1) +\operatorname{codim}( D_2)+\operatorname{codim}(C_2\cap D_2)+\operatorname{codim}( D_1) \\
&= \operatorname{codim}(C_1\cap D_1) +\operatorname{codim}(C_2\cap D_2) \\
&= \operatorname{codim}(C_1\cap D_1 \cap C_2 \cap D_2) \\
&=\operatorname{codim}(C_j \cap D)=\operatorname{codim}(F_j),
\end{align*}
and hence the inequality must also be an equality, yielding the claim.

\smallskip
Now, since $Z_j \supset C_1 \supsetneq C_j$, we have certainly that $Z_j\cap D \supset C_1\cap D \supset C_j\cap D$.  But since both the first and last set are just $F_j$, both inclusions are equalities.  We rewrite the last equality:
$$(C_1\cap D_1)\cap D_2 = (C_1\cap D_1)\cap (C_2 \cap D_2),$$
and derive that $\operatorname{codim}im(C_1\cap D_1) = \operatorname{codim}(C_1\cap D_1)+\operatorname{codim}(C_2\cap D_2)$ and thus that $\operatorname{codim}(C_2\cap D_2)=0$. But then the claim above for $i=2$ would imply that $\operatorname{codim}(C_2)=0$, yielding a contradiction since by assumption $\operatorname{codim}(C_2)>0$.
\end{proof}

\subsection{ Proof of Proposition \ref{independence} }\label{independenceproof}

We recall the statement. \emph{The $2^{N+2} -N -4$ conditions in Theorem \ref{conditions} are independent. That is, for each of them we can find a point $Q$ in $\mathbb{C}^{\boldsymbol{d}}$ \emph{not} satisfying that condition, but satisfying all the other ones.}


(i) We start by fixing an \lq affine\rq\ condition, induced by a subspace $Z_j$ with $\operatorname{dim}(Z_j)=k$, where $k\in\{0, \dots, N-1\}$.
 So, for $k=N-1$, $Z_j$ is a component of $A_N$;  apart from that, $Z_j$ is a $k$-dimensional centre of blow-up.
In this way, we treat simultaneously all conditions of the form (\ref{EQ A}), (\ref{EQ B}), (\ref{EQ C}), and (\ref{EQ D}).

We have that $Z_j$ is contained in $\binom{N+1-k}{2}$ components of $A_N$; we put all the $\binom{N+1-k}{2}$ associated $\operatorname{Re}(s_{ij})$ equal to
\begin{equation}\label{value1}
\frac{-2}{N+1-k}.
\end{equation}
Then, their sum equals $-(N-k)$, and hence the inequality associated with $Z_j$ is {\em not} satisfied.
We choose another value for all other $\operatorname{Re}(s_{ij})$, such that the total average is equal to $-\frac{2}{N+2}$.  A straightforward calculation yields that this value is
\begin{equation}\label{value2}
\frac{-2}{N+2}  \cdot \frac{(k+1)N+2k}{2(k+1)N - k(k-1)}.
\end{equation}
Note that the value (\ref{value1}) is smaller than the average and (hence) smaller than the value (\ref{value2}).

We must show that with these choices {\em all other} inequalities are satisfied. We first consider the \lq affine\rq\ inequalities (\ref{EQ A}), (\ref{EQ B}), (\ref{EQ C}) and (\ref{EQ D}), associated to some $Z$ different from $Z_j$.

\smallskip
\noindent {\sc Case 1 :}  $\dim Z = k$.
The sum of the $\binom{N+1-k}{2}$ constituting $\operatorname{Re}(s_{ij})$ is certainly larger than $-(N-k)$, since at least one summand has the value (\ref{value2}).

\smallskip
\noindent {\sc Case 2 :}  $\dim Z = \ell > k$.
It is sufficient to check the \lq worst case\rq, occurring when the $\binom{N+1-\ell}{2}$ summands  $\operatorname{Re}(s_{ij})$ are as small as possible, which in this case means they all have the value (\ref{value1}).  And indeed
$$
\binom{N+1-\ell}{2} \cdot \left[ \frac{-2}{N+1-k} \right] > -(N-\ell) ,
$$
being simply equivalent to $\ell >k$.

\smallskip
\noindent {\sc Case 3 :}  $\dim Z = \ell <k$.
Now the worst case occurs when $\binom{N+1-k}{2}$ summands have the smallest value (\ref{value1}) and the other $\binom{N+1-\ell}{2}-\binom{N+1-k}{2}$ summands have the value (\ref{value2}), and then we must check that
\begin{align*}
&\binom{N+1-k}{2} \cdot \left[ \frac{-2}{N+1-k} \right]  \\
+ &\left[ \binom{N+1-\ell}{2}-\binom{N+1-k}{2}\right] \cdot \frac{-2}{N+2}  \cdot \frac{(k+1)N+2k}{2(k+1)N - k(k-1)}  > -(N-\ell) .
\end{align*}
By a straightforward computation, this is equivalent to the valid inequality
$(k\ell + \ell + k+ 3)N + 2k\ell > 0$.

\medskip
Second, we consider the  inequalities \lq from infinity\rq\ (\ref{EQ E}), associated to a subspace $Z_j$ \lq at infinity\rq\  of dimension $d\in \{0,\dots, N-1\}$.
Now we have $\frac{(N-d)(N+d+3)}{2}$ summands. Note that now the worst cases occur when the summands  $\operatorname{Re}(s_{ij})$ are as large as possible, meaning choosing as many values (\ref{value2}) as possible.

 There will be two cases, depending on the maximum of $\frac{(N-d)(N+d+3)}{2}$ and
$$
\frac{(N-d)(N+d+3)}{2} - \binom{N+1-k}{2} = \frac{2(k+1)N-k(k-1)}{2}.
$$
We claim that
\begin{equation}\label{d-inequality}
\frac{2(k+1)N-k(k-1)}{2} \geq \frac{(N-d)(N+d+3)}{2} \Leftrightarrow k+d \geq N.
\end{equation}
The verification of this equivalence is again a straightforward computation.

\smallskip
\noindent {\sc Case 1 :}  $k+d \geq N$.  By (\ref{d-inequality}), we can choose all summands to have value (\ref{value2}). So we must verify that
$$
\frac{(N-d)(N+d+3)}{2} \cdot \frac{-2}{N+2}  \cdot \frac{(k+1)N+2k}{2(k+1)N - k(k-1)} < -(N-d).
$$
This is equivalent to $(k-1)N +4k >0$.  This is clearly satisfied for all $k \geq 1$. Here $k=0$ is not possible since $k+d \geq N$ and $d\leq N-1$.

\smallskip
\noindent {\sc Case 2 :}  $k+d \leq N-1$.  By (\ref{d-inequality}), we can choose at most $\frac{2(k+1)N-k(k-1)}{2}$ summands to have value (\ref{value2}). So we must verify that
\begin{align*}
&\left( \frac{(N-d)(N+d+3)}{2}- \frac{2(k+1)N-k(k-1)}{2}\right) \cdot \frac{-2}{N+1-k} \\
&+ \frac{2(k+1)N-k(k-1)}{2} \cdot
\frac{-2}{N+2}  \cdot \frac{(k+1)N+2k}{2(k+1)N - k(k-1)} < -(N-d).
\end{align*}
A computation shows that this inequality is equivalent to
$$
(N+dN+2d)(N-k-d-1) + 2(N-d) > 0,
$$
which is satisfied since $N-k-d-1 \geq 0$ and $N-d\geq 1$.

\bigskip
(ii) Next, we treat the conditions \lq from infinity\rq\ (\ref{EQ E}), associated to a $Z_j$ of dimension $d\in \{0, \dots, N-1\}$. The only such $Z_j$ with $d=0$ is $\underline{\infty}$; we handle this case first, choosing the value $\frac{-2}{N+3}$ for all the $\frac{N(N+3)}{2}$ summands $\operatorname{Re}(s_{ij})$.  Then, since $\frac{N(N+3)}{2} \cdot \frac{-2}{N+3} = -N$, the condition associated to the blow-up at $\underline{\infty}$ is indeed not satisfied. We verify now that all other conditions are satisfied.

The \lq affine\rq\ conditions, associated to a $Z$ of dimension $k\in \{0,\dots, N-1\}$, are of the form $\binom{N+1-k}{2} \cdot \frac{-2}{N+3} > -(N-k)$, which is equivalent with $k+2>0$.

The other conditions, associated to a $Z$ \lq at infinity\rq\ of dimension $\ell\in \{1,\dots, N-1\}$, are of the form $\frac{(N-\ell)(N+\ell+3)}{2} \cdot \frac{-2}{N+3} < -(N-\ell)$, which is equivalent with $\ell>0$.

\medskip
Finally we treat the cases where $Z_j$ has dimension $d\in \{1, \dots, N-1\}$.
We have that $Z_j$ is contained in $\frac{(N-d)(N+d+3)}{2}$ components of $\bar{A}_N$.
With a similar motivation as in case (i), we put all the $\frac{(N-d)(N+d+3)}{2}$  associated summands $\operatorname{Re}(s_{ij})$ equal to
\begin{equation}\label{value3}
\frac{-2}{N+d+3},
\end{equation}
and all other $\operatorname{Re}(s_{ij})$ equal to
\begin{equation}\label{value4}
\frac{-2}{N+2}\cdot\frac{(d+1)N + 2d}{d(d+3)}.
\end{equation}
Note that in this case, the value (\ref{value3}) is larger than the average and (hence) larger than the value (\ref{value4}).

Clearly, the condition in Theorem \ref{conditions} associated to $Z_j$ is {\em not} satisfied, and one can verify that very similar calculations as in case (i) show that all other conditions in Theorem \ref{conditions} are satisfied.

\section{\label{Section 6}Applications}
In this Section, we describe how our work recovers some results established in \cite{Sussman} and \cite{Brown-Dupont}.

\subsection{Sussman's results on Dotsenko-Fateev integrals}\label{Sussman-Section}
Let 
\begin{equation*}
\Delta _{N}=\left\{ \left( x_{1},\ldots ,x_{N}\right) \in \left[ 0,1\right]
^{N}\mid x_{1}\leq \cdots \leq x_{N}\right\} 
\end{equation*}%
denote the standard $N$-simplex considered as a a subset of $\mathbb{R}^{N}$.   In \cite{Sussman}, Sussman studied Mehta-Selberg integrals of the form
\begin{multline*}
S_{N}\left[ F\right] \left( \alpha ,\beta ,\gamma \right)
=\int\limits_{\Delta _{N}}F\left( x_{1},\ldots ,x_{N}\right)
\prod\limits_{j=1}^{N}x_{j}^{\alpha _{j}}\left( 1-x_{j}\right) ^{\beta _{j}}
\text{ }\times  \\
\prod\limits_{1\leq j<k\leq N}\left( x_{k}-x_{j}\right) ^{2\gamma
_{jk}}\prod\limits_{j=1}^{N}dx_{i},
\end{multline*}
where $F\in C^{\infty }\left( \Delta _{N}\right) $, and $\alpha =\left(
\alpha _{1},\ldots ,\alpha _{N}\right) $, $\beta =\left( \beta _{1},\ldots
,\beta _{N}\right) \in \mathbb{C}^{N}$, $\gamma =\left\{ \gamma _{jk}=\gamma
_{kl}\right\} _{1\leq j<k\leq N}\in \mathbb{C}^{\frac{\left( N-1\right) N}{2}
}$. He established the meromorphic continuation for  these integrals and describes explicitly their polar locus; see  \cite[Theorem 1.1]{Sussman}. 

When we apply our Theorem \ref{maintheorem} to the special case $D=\Delta_N$, we obtain several main results in \cite{Sussman}. More precisely, Theorem \ref{maintheorem} selects the following subset of convergence conditions out of the list of Theorem \ref{conditions}.

\begin{theorem}\label{conditionsDelta}
The zeta function  $ Z_\varphi^{(N)}\left(\Delta_N; \boldsymbol{s}\right)$ converges on the (unbounded) open subset in $\mathbb{C}^{\boldsymbol{d}}$, determined by the following $\frac{N(N+3)}{2}$ inequalities.

\noindent
Only the components $\{x_1=0\}$, $\{x_N=1\}$ and the $N-1$ components $\{x_i=x_{i+1}\}$ (i=1,\dots, N-1) of $A_N$ induce the conditions
\begin{equation}
\operatorname{Re}(s_{01}) > -1 , \quad \operatorname{Re}(s_{N(N+1)}) > -1 \quad\text{ and }\quad  \operatorname{Re}(s_{i(i+1)}) > -1.
\label{EQ A'}
\end{equation}

\noindent
There is only one  exceptional $E_\ell$, coming from a centre $Z_\ell$ of dimension $k\in\{0,\dots,N-2\}$  as in case (a) in (\ref{centres}), inducing a condition, namely $Z_k =\{x_1=\dots=x_{N-k}=0\}$ inducing
\begin{equation}
\sum_{j=1}^{N-k}  \operatorname{Re}(s_{0j})  + \sum_{\substack{1\leq i,j \leq N-k \\ i<j } }   \operatorname{Re}(s_{ij}) > - (N-k).
\label{EQ B'}
\end{equation}

\noindent
There is only one  exceptional $E_\ell$, coming from a centre $Z_\ell$ of dimension $k\in\{0,\dots,N-2\}$ as in case (b) in (\ref{centres}), inducing a condition,  namely $Z_k =\{x_{k+1}=\dots=x_{N}=1\}$ inducing
\begin{equation}
\sum_{j=k+1}^{N}  \operatorname{Re}(s_{j(N+1)})  + \sum_{\substack{k+1\leq i,j \leq N \\ i<j } }   \operatorname{Re}(s_{ij}) > - (N-k).
\label{EQ C'}
\end{equation}

\noindent
There are $k$ exceptional $E_\ell$, coming from  centres $Z_\ell$ of dimension $k\in\{1,\dots,N-2\}$ as in case (c) in (\ref{centres}), inducing a condition, namely $Z=\{x_m=x_{m+1}=\dots=x_{m+N-k}\}$ $ (1\leq m \leq k)$  inducing
\begin{equation}
 \sum_{m\leq i<j \leq m+N-k}    \operatorname{Re}(s_{ij}) > - (N-k).
\label{EQ D'}
\end{equation}
\end{theorem}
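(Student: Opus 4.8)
The plan is to read this statement off from Theorem \ref{maintheorem} together with its proof (Propositions \ref{if} and \ref{only if}) and the explicit list of Theorem \ref{conditions}. Since $\Delta_N$ is compact, one uses the embedded resolution $\pi\colon X\to\mathbb{R}^N$ of $A_N$; the relevant subspaces $Z_j$ are then the components of $A_N$ and the blow-up centres of types (a), (b), (c) in (\ref{centres}), and there are no subspaces ``at infinity'', so no condition of type (\ref{EQ E}) appears. By Proposition \ref{only if}, every $Z_j$ with $\dim(Z_j\cap\Delta_N)<\dim(Z_j)$ has its component $E_j$ disjoint from the strict transform of $\Delta_N$, hence imposes no convergence condition at all; by Proposition \ref{if} (via Lemma \ref{Lemma0} and Proposition \ref{independence}), every other $Z_j$ contributes exactly its inequality from Theorem \ref{conditions}, and the corresponding hyperplane genuinely lies in the polar locus. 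So $Z_\varphi^{(N)}(\Delta_N;\boldsymbol{s})$ converges and is holomorphic precisely on the open subset of $\mathbb{C}^{\boldsymbol{d}}$ cut out by those conditions of Theorem \ref{conditions} attached to the $Z_j$ with $\dim(Z_j\cap\Delta_N)=\dim(Z_j)$, and the whole task reduces to identifying this set of $Z_j$ and copying off the corresponding inequalities.

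The combinatorial input is the following \emph{propagation property} of the chain $0\leq x_1\leq x_2\leq\dots\leq x_N\leq 1$ defining $\Delta_N$: on $\Delta_N$, the equality $x_i=0$ forces $x_1=\dots=x_i=0$; the equality $x_i=1$ forces $x_i=x_{i+1}=\dots=x_N=1$; and, for $i<j$, the equality $x_i=x_j$ forces $x_i=x_{i+1}=\dots=x_j$. Each of these is immediate from monotonicity. Conversely, if $Z_j$ is cut out only by equations that are already in ``consecutive'' (interval) form, as on the right-hand sides above, then $Z_j\cap\Delta_N$ contains a relatively open subset of $Z_j$ --- the points whose remaining coordinates strictly increase and lie strictly between $0$ and $1$ --- and is therefore full-dimensional in $Z_j$. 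Thus the propagation property is the only obstruction to full-dimensionality, and the surviving $Z_j$ are exactly the ``interval-type'' ones.

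It then remains to run through the four families. For a component $\{x_i=0\}$ of $A_N$ (of dimension $N-1$), propagation gives $\dim(\{x_i=0\}\cap\Delta_N)=N-i$, which equals $N-1$ only for $i=1$; symmetrically, among the $\{x_i=1\}$ only $\{x_N=1\}$ survives, and among the $\{x_i=x_j\}$ only the $\{x_i=x_{i+1}\}$ with $1\leq i\leq N-1$; reading off condition (\ref{EQ A}) in each of these cases gives (\ref{EQ A'}). For a type (a) centre $\{x_j=0 : j\in J\}$ with $\sharp J=N-k$, propagation forces $x_1=\dots=x_{\max J}=0$, so it is full-dimensional in $Z_j$ exactly when $J=\{1,\dots,N-k\}$; feeding this $J$ into (\ref{EQ B}) gives (\ref{EQ B'}). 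The type (b) case is symmetric, leaving the single centre $J=\{k+1,\dots,N\}$ and condition (\ref{EQ C'}). For a type (c) centre $\{x_i=x_j : i,j\in J\}$ with $\sharp J=N-k+1$, propagation forces equality of all coordinates with index between $\min J$ and $\max J$, so full-dimensionality holds exactly when $J$ is an interval $\{m,m+1,\dots,m+N-k\}$, which requires $1\leq m\leq k$; this yields the $k$ instances of condition (\ref{EQ D'}). As a consistency check, the surviving conditions number $(N+1)+(N-1)+(N-1)+\sum_{k=1}^{N-2}k=\frac{N(N+3)}{2}$, matching the claim; see also Example \ref{exampleDelta2} for $N=2$.

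Since the statement is a specialization of Theorem \ref{maintheorem}, there is no conceptual obstacle; the point needing genuine care is the two-sided use of the propagation property --- verifying that imposing a non-interval equation on $\Delta_N$ really does lower the dimension, and that every interval-type subspace really does meet $\Delta_N$ in full dimension --- together with the routine bookkeeping that matches each surviving $Z_j$ to its inequality in Theorem \ref{conditions} and confirms the count.
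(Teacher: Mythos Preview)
Your proposal is correct and follows exactly the paper's approach: the paper simply states that Theorem \ref{conditionsDelta} is obtained by applying Theorem \ref{maintheorem} to $D=\Delta_N$, selecting from the list in Theorem \ref{conditions} those $Z_j$ with $\dim(Z_j\cap\Delta_N)=\dim(Z_j)$. You have supplied the combinatorial verification (the propagation property and the resulting identification of the surviving interval-type $Z_j$) that the paper leaves to the reader.
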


\begin{example}\rm
We refer to Example \ref{exampleDelta2} and Figure 4 for  the case $N=2$.  The convergence conditions are
\begin{align*}
&\operatorname{Re}(s_{01}) > -1 ,\quad  \operatorname{Re}(s_{12}) > -1 , \quad \operatorname{Re}(s_{23}) > -1;\\
\operatorname{Re}(s_{01}) +& \operatorname{Re}(s_{02}) + \operatorname{Re}(s_{12}) > -2, \quad
\operatorname{Re}(s_{12}) + \operatorname{Re}(s_{13}) + \operatorname{Re}(s_{23}) > -2,
\end{align*}
which are, of course, a subset of the conditions in Example \ref{conditionsN=2}.
\end{example}

\begin{example}\rm
We sketch the simplex $\Delta_3$ in Figure 6. Compared with Figure 2, we now only put the lines $\{x_1=x_2=0\}$, $\{x_1=x_2=x_3\}$, and $\{x_2=x_3=1\}$ in bold, being precisely the one-dimensional contributing $Z_\ell$.
The nine convergence conditions are
\begin{align*}
\operatorname{Re}(s_{01}) &> -1 ,\quad  \operatorname{Re}(s_{12}) > -1 , \quad \operatorname{Re}(s_{23}) > -1, \quad \operatorname{Re}(s_{34}) > -1 ;\\
\operatorname{Re}(s_{01}) &+ \operatorname{Re}(s_{02}) + \operatorname{Re}(s_{12}) > -2, \quad
\operatorname{Re}(s_{12}) + \operatorname{Re}(s_{13}) + \operatorname{Re}(s_{23}) > -2, \\\
&\operatorname{Re}(s_{23}) + \operatorname{Re}(s_{24}) + \operatorname{Re}(s_{34}) > -2 ;  \\
\operatorname{Re}(s_{01})&+ \operatorname{Re}(s_{02}) + \operatorname{Re}(s_{03}) + \operatorname{Re}(s_{12}) + \operatorname{Re}(s_{13}) + \operatorname{Re}(s_{23})> -3, \\
&\operatorname{Re}(s_{12}) + \operatorname{Re}(s_{13}) + \operatorname{Re}(s_{23})+ \operatorname{Re}(s_{14}) + \operatorname{Re}(s_{24}) + \operatorname{Re}(s_{34}) > -3.
\end{align*}

\end{example}

\centerline{
\beginpicture
\setcoordinatesystem units <.35truecm,.35truecm>

\putrule from 0 8 to 8 8
\setlinear \plot 0 8  5 3 /

\setdashes
\plot 0 0   8 8 /
\setsolid

\setdots <1.5pt>

\putrule from 8 8 to 8 0
\putrule from 0 0 to 8 0
\putrule from -3 -5 to -3 3
\putrule from -3 -5 to 5 -5
\putrule from 5 3 to -3 3
\putrule from 5 3 to 5 -5

\setlinear    \plot  0 8   -3 3 /
\setlinear    \plot  8 0   5 -5 /
\setlinear    \plot  0 0    5 3 /
\setlinear    \plot  0 0   -3 -5 /

\setsolid

\linethickness=1.5pt
\putrule from 0 0 to 0 8

\setplotsymbol ({$\cdot$})

\setlinear    \plot 5 3   8 8 /
\setlinear    \plot  0 0    5 3 /

\put{$\bullet$} at 0 0
\put{$\bullet$} at 5 3
\put{$\underline{0}$} at -0.7 0.7
\put{$\underline{1}$} at 5.7 2.3

\put{Figure 6} at 2 -7

\endpicture
}

\bigskip
The convergence region in Theorem \ref{conditionsDelta} is exactly (formulated with our notations) the one stated in \cite[(8)]{Sussman}.
Using Remark \ref{Gammasum}, our method results also in the expression for $ Z_\varphi^{(N)}\left(\Delta_N; \boldsymbol{s}\right)$ in terms of Gamma functions and the  analytic continuation statement of  \cite[Theorem 1.1 and Corollary 1.1.1]{Sussman}. To be precise, we obtain  \cite[Theorem 1.1 and Corollary 1.1.1]{Sussman} {\em in the sense of distributions}, for then all numbers
$o_{\mathcal{I}}$ in loc. cit. should be taken to be zero.  If we were to develop our results above from the point of view of zeta functions associated to a concrete test function $\varphi$, we would obtain the results in \cite{Sussman} as stated there.

\bigskip

Sussman also studied a variation of $Z_\varphi^{(N)}\left( \square_N; \boldsymbol{s}\right) $, namely
\begin{equation}
I_\varphi^{(N)}\left( \square_N; \boldsymbol{s}\right)  :=
{\displaystyle\int\limits_{\square_N}} \varphi(x)
{\displaystyle\prod\limits_{i=1}^{N}}
\left\vert x_{i}\right\vert^{s_{0i}}
{\displaystyle\prod\limits_{i=1}^{N}}\left\vert 1-x_{i}\right\vert^{s_{i(N+1)}}\text{ }
{\displaystyle\prod\limits_{1\leq i<j\leq  N}}
\left\vert x_{i}-x_{j} +i0 \right\vert^{s_{ij}}
dx. \label{zeta_function_string_variation}
\end{equation}

\medskip
First, for $Z_\varphi^{(N)}\left( \square_N; \boldsymbol{s}\right) $ itself, Theorem \ref{maintheorem} selects as convergence conditions precisely all the conditions (\ref{EQ A}),
 (\ref{EQ B}), (\ref{EQ C}) and (\ref{EQ D}) from Theorem \ref{conditions}.  

\medskip
We refer to \cite[Vol. 1, 3.6]{G-S} for the definition of the (generalized) function $(x+i0)^\lambda$. Crucial here is that this is an entire function in $\lambda$. As a consequence, each $Z_\ell$ of the form  $\{x_i-x_j=0\}$ or an intersection of these will not even induce a candidate polar hyperplane for $I_\varphi^{(N)}\left( \square_N; \boldsymbol{s}\right)$.

Concretely, our Theorem \ref{maintheorem} then provides the same convergence conditions for $I_\varphi^{(N)}\left( \square_N; \boldsymbol{s}\right)$  as given by \cite[Theorem 1.3]{Sussman},
namely the list above without the ones \lq coming from diagonals\rq. More precisely, these conditions are $\operatorname{Re}(s_{0i}) > -1$ and $\operatorname{Re}(s_{i(N+1)}) > -1$ for $i=1,\dots,N$, and all the conditions  (\ref{EQ B}) and  (\ref{EQ C}) from Theorem \ref{conditions}.

 Moreover, because of our Proposition \ref{independence}, we can assert that this statement is sharp: having an analytic continuation to a larger open set is impossible.

\subsection{General $N$-point genus zero open string amplitudes}

We follow the notation from Section 3.2 in the paper \cite{Brown-Dupont} of Brown and Dupont. The $N$-point
genus zero open string amplitude, $N=n+3$, is formally defined as
\begin{equation*}
I^{\text{open}}\left( \omega ,\boldsymbol{s}\right)
=\int\limits_{0<t_{1}<\cdots <t_{n}<1}\text{ \ }\prod\limits_{0\leq i<j\leq
n+1}\left( t_{j}-t_{i}\right) ^{s_{i}{}_{j}}\omega ,
\end{equation*}
where $s_{ij}\in \mathbb{C}$,  with the convention $t_{0}=0$, $t_{n+1}=1$, and with $\omega $  a  differential form  of type
\begin{equation*}
\omega =\frac{dt_{1}\wedge \cdots \wedge dt_{n}}{\prod\limits_{i=0}^{n}
\left( t_{\sigma \left( i+1\right) }-t_{\sigma \left( i\right) }\right) },
\end{equation*}
where $\sigma $ is a permutation of $\left\{ 0,1,\ldots ,n+1\right\} $.
So $I^{\text{open}}\left( \omega ,\boldsymbol{s}\right)$ is, 
up to a translation in some variables, of the form 
$\pm Z_{\varphi
}^{\left( n\right) }\left( \Delta_n;\boldsymbol{s}\right) $, where $\varphi \equiv 1$, and hence can be considered as 
 a particular case of our integrals $Z_{\varphi }^{\left( N\right) }\left( D;\boldsymbol{s}\right) $. Our results
provide a region of convergence and a meromorphic continuation for $I^{\text{open}}\left( \omega ,\boldsymbol{s}\right) $ in the parameters $s_{ij}\in 
\mathbb{C}$, which is an alternative proof for the meromorphic regularization of $I^{\text{open}}\left( \omega ,\boldsymbol{s}\right) $  in \cite[Section 4]{Brown-Dupont}.
Furthermore, we provide an explicit description of the
polar locus as a finite union of hyperplanes, whose equations can be
computed.
If $ Z_{\varphi
}^{\left( n\right) }\left( \Delta_n;\boldsymbol{s}\right)$ is interpreted as a string amplitude, then its poles correspond to the mass spectrum of the string theory.

On the other hand,
 in \cite{Brown-Dupont}, the authors provide many other results, for instance, an interpretation of the poles in the Mandelstam variables $s_{ij}$ in terms of the poles of $\omega$, as well as Laurent expansions with multiple zeta values as coefficients. Note that the sums in Theorem 1.1 and Theorem 4.20 in \cite{Brown-Dupont} are \lq in spirit\rq\ related to sums over various charts in an embedded resolution (with respect to some partition of unity). 
In both settings, the integral is computed as a finite sum of more \lq accessible\rq\ integrals, which can be proven to converge on some domain, and which are often reduced to lower dimensional integrals.
 

\section{Extension of the main results to general hyperplane arrangements}\label{generalizations}

An essential extra feature of our conceptual proofs of Propositions \ref{if} and \ref{only if} is that they also apply to similar zeta functions associated with {\em arbitrary} hyperplane arrangements.

Let $L_i(x) \in \mathbb{R}[x_1,\dots,x_N]$ be a polynomial of degree $1$, for $i=1,\dots,d$. Put $f(x):=\prod_{i=1}^d L_i(x)$ and let $A:= f^{-1}(0)$ be the induced affine hyperplane arrangement in $\mathbb{R}^N$ with irreducible components $Z_i:=L_i^{-1}(0)$.
 We consider the associated multivariate zeta function
\begin{equation}
Z_\varphi\left( f, D; \boldsymbol{s}\right)  :=
{\displaystyle\int\limits_D} \varphi(x)
{\displaystyle\prod\limits_{i=1}^{d}}
\left\vert L_{i}(x)\right\vert^{s_i}
dx, \label{zeta_function_string_general}
\end{equation}
where the function $\varphi:\mathbb{R}^N\to \mathbb{C}$ is  smooth on the (closed) integration domain $D$.
Here we allow as integration domain $D$  \emph{all} possible polyhedra with boundary conditions given by inequalities coming from components of $A$, that is, $D$ is given by conditions of the form $L_i(x) \geq 0$  or $L_i(x) \leq 0$ for some indices $i$ such that $\operatorname{dim}(D)=N$ (including the case \lq no conditions\rq, being $D=\mathbb{R}^N$). We also assume that   $Z_\varphi\left( D; \boldsymbol{s}\right)$ converges for all $\boldsymbol{s}$ in a nonempty open part of $\mathbb{C}^d$; this is automatic when $\varphi$ has compact support or when $D$ is bounded.

\medskip

We recall some standard terminology in the theory of hyperplane arrangements in $\mathbb{R}^N$. 
An arrangement $A$ is {\em central} if 
$\cap_{\ell=1}^d Z_\ell \neq \emptyset$.
A central hyperplane arrangement $A$ is {\em indecomposable} if there is no linear change of coordinates on $\mathbb{R}^N$
such that $f$ can be written as the product of two non-constant polynomials in disjoint sets of variables.

 An {\em edge} of $A$ is any (nonempty) intersection of components of $A$. 
An edge $W$  of $A$
is called {\em dense} if the central arrangement $A_W:= \{Z_\ell \text{ in } A \mid Z_\ell \supset W\}$ is indecomposable.
(In particular, every component $Z_\ell $ itself is a dense edge of $A$.)
One can compute an embedded resolution of $A$ as follows, by blowing up only along dense edges, see  \cite[Theorem 3.1]{STV}, also \cite[10.8]{Va}.

\begin{theorem}\label{dense edge resolution}
 Let $\pi_0: X_1 \to X_0=\mathbb{R}^N$ be the blow-up of all zero-dimensional dense edges of $A$. In general, let $\pi_k: X_{k+1} \to X_{k}$ be the blow-up of all (strict transforms of) $k$-dimensional dense edges of $A$, for  $k =  0,\dots, N-2$,  in some order.
 Then $\pi=\pi_0 \circ \dots \circ \pi_{N-2}$ is an embedded resolution of $A\subset \mathbb{R}^N$.
\end{theorem}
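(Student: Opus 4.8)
The plan is to reduce the statement to a local assertion about how the intersection lattice of $A$ behaves under the blow-up of a single dense edge, and then to run an induction on dimension. Since being a normal crossings divisor is a local condition, and each $\pi_k$ is a blow-up along (a strict transform of) an edge of $A$ — a center contained in the arrangement, which one must check stays nonsingular, see (i) below — all the $X_k$ will be nonsingular, $\pi$ will be proper and birational and an isomorphism off $\pi^{-1}(A)$, so it suffices to prove that at every point of $X_{N-1}$ the total transform $\pi^{-1}(A)$ is locally a monomial in suitable coordinates. I would track, for each $k$, the \emph{current arrangement} $A^{(k)}$ on $X_k$, meaning the union of the strict transform of $A$ and of all exceptional divisors created by $\pi_0,\dots,\pi_{k-1}$, and prove by induction on $k$ that $A^{(k)}$ already has normal crossings near every point lying on an edge of dimension $\le k-1$, while the remaining non-transversal strata lie along strict transforms of dense edges of $A$ of dimension $\ge k$.

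First I would establish the local product description of the blow-up of a dense edge $W$ of dimension $m$: near a general point $p\in W$ one chooses coordinates so that $W=\{y_1=\dots=y_{N-m}=0\}$ and $A$ coincides near $p$ with the central arrangement $A_W=\bigcup_{Z_\ell\supset W}\{L_\ell=0\}$, the $L_\ell$ linear in $y_1,\dots,y_{N-m}$; then, in a chart of the blow-up of $W$ (via the substitution \eqref{lblowup} with $r=N-m$), the exceptional divisor $E_W$ is $\{u_1=0\}$ and the strict transform of $A$ meets $E_W$ in the projectivization $\mathbb{P}(A_W)$ of $A_W$, cut out by the dehomogenizations $L_\ell(1,u_2,\dots,u_{N-m})$. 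Because $A_W$ is \emph{indecomposable} — so that $W$ is a genuinely irreducible locus of non-transversality, not bypassable by blowing up smaller edges — and its center has codimension $N-m\ge 2$, this single blow-up strictly simplifies $A$ along $W$ (its components no longer share a common point there), which makes the induction terminate; restricting to dense edges is precisely what keeps the procedure both correct and minimal, since a non-dense edge factors as a product and the non-normal-crossings strata through it are already resolved by blowing up the dense edges that properly contain it.

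Next I would invoke — or, for a self-contained account, reprove — the combinatorial core, namely that the set of dense edges of $A$ forms a \emph{building set} in the sense of De Concini--Procesi; this is essentially \cite[Theorem~3.1]{STV} (see also \cite[10.8]{Va} and \cite{DP}). Two consequences are needed. (i) After $\pi_0,\dots,\pi_{k-1}$ have resolved all dense edges of dimension $<k$, the strict transforms of the finitely many dense edges of $A$ of dimension exactly $k$ are pairwise disjoint and nonsingular, so that the simultaneous blow-up $\pi_k$ is well defined. (ii) Every non-transversal stratum of $A^{(k+1)}$ is again a strict transform of a dense edge of $A$, necessarily now of dimension $>k$, or a configuration on an exceptional divisor which one checks likewise corresponds to a dense edge of $A$ of dimension $>k$; in particular no bad stratum of dimension $\le k$ survives or is created. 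Granting (i) and (ii) the induction closes: after $\pi_0,\dots,\pi_{N-2}$ every edge of dimension $\le N-2$ is a normal crossings point, and the strata of dimension $N-1$ are the (pairwise transversal, nonsingular) divisors $E_j$ themselves; hence $\pi^{-1}(A)$ is a normal crossings divisor and $\pi$ is an embedded resolution.

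The hard part will be verifying the building-set property of the dense edges and tracking the intersection lattice through a single blow-up, i.e. establishing (i) and (ii): that the simultaneous blow-up of same-dimensional dense edges is legitimate even though two such edges may meet along an edge that is itself \emph{not} dense, and that blowing up creates no new bad strata of small dimension. This rests on the decomposition theory of central arrangements (unique factorization into indecomposable factors supported on complementary coordinate subspaces), and is exactly the delicate part carried out in \cite[\S 3]{STV}; by contrast, the geometry of each individual blow-up is routine.
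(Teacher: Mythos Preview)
The paper does not give its own proof of this theorem: it is stated with the reference ``see \cite[Theorem~3.1]{STV}, also \cite[10.8]{Va}'' and then used as a black box. Your sketch is a faithful outline of precisely that cited argument---the De Concini--Procesi/STV approach via the building-set property of dense edges---so there is nothing to compare against in the paper itself; your proposal is consistent with the external proof the authors invoke.

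One small imprecision worth tightening if you expand this into a full proof: after blowing up a single dense edge $W$, it is not true that the strict transforms of the hyperplanes through $W$ ``no longer share a common point there''; they can and typically do meet on the exceptional divisor, in the projectivized arrangement $\mathbb{P}(A_W)$ you describe. What improves is the local multiplicity (the codimension of the smallest stratum through a point drops), and it is this, together with the building-set combinatorics you cite for (i) and (ii), that drives the induction.
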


As in Subsection \ref{resolution}, we consider the compactification $(\mathbb{P}_{\mathbb{R}}^1)^N$ of $\mathbb{R}^N$, and the induced hyperplane arrangement $\bar{A}$ in $(\mathbb{P}_{\mathbb{R}}^1)^N$, consisting of the closures of the components of $A$ in $(\mathbb{P}_{\mathbb{R}}^1)^N$, together with the $N$ \lq hyperplanes at infinity\rq. 
The notions above, as well as Theorem \ref{dense edge resolution}, have an immediate extension to $\bar{A}$.

One can verify that our \lq economic\rq\  embedded resolution $\pi$ of the hyperplane arrangement $A_N$ or $\bar{A}_N$, constructed in Subsection \ref{resolution}, is precisely this composition of blow-ups along only dense edges!

\bigskip  The meromorphic continuation of $Z_\varphi\left(f, D; \boldsymbol{s}\right)$ follows the general theory, as recalled in Subsection \ref{zetafunctions}. 
Each dense edge $Z_\ell$  induces a convergence condition, being a linear inequality $\mathcal{L}_\ell $ in the $\operatorname{Re}(s_i)$ of the form $\sum_{i=1}^m a_i^{(\ell)} s_i +b^{(\ell)} >0$,
where the $a_{i}^{(\ell)}$ are integers (not all zero) and $b^{(\ell)}$ is an integer, 

For an affine $Z_\ell$ of dimension $k$, this inequality has the easy form 
$$\sum_{\substack{Z_i\supset Z_\ell \\\dim(Z_i)=N-1 } }  \operatorname{Re}(s_i)  > -(N-k).$$
So one has the following result.

\begin{theorem}\label{thm general hyperplanes}
Assume that the zeta function   $Z_\varphi\left( D; \boldsymbol{s}\right)$ converges for all $\boldsymbol{s}$ in some nonempty open subset of $\mathbb{C}^d$ (this is automatic when $\varphi$ has compact support or when $D$ is bounded).
Then $Z_\varphi\left(f, D; \boldsymbol{s}\right)$ is convergent and holomorphic in the open domain, determined by the inequalities $\mathcal{L}_\ell$, associated to the dense edges $Z_\ell$ of $A$, i.e., in the domain
$$
\left\{  (s_{1},\ldots,s_{d})\in\mathbb{C}^{d} \mid \sum_{i=1}^{m}a_{i}^{(\ell)}
\operatorname{Re}(s_{i})+b^{(\ell)}>0\right\}  .
$$
Furthermore, it admits an analytic continuation to the whole $\mathbb{C}^{d}$, as a meromorphic function with polar locus contained in
\[
\bigcup_{t\in\mathbb{N}}\left\{  \sum_{i=1}^{m}a_{i}s_{i}^{(\ell)}+b^{(\ell)}+t=0\right\}.
\]
\end{theorem}

Now the same proof as for Proposition \ref{only if} yields the following analogous statement for arbitrary arrangements!

\begin{proposition}\label{prop general hyperplanes}
Let $Z_j$ be a dense edge of $A$ or $\bar{A}$.
If $\operatorname{dim}(Z_j\cap D) < \operatorname{dim} (Z_j)$, then $Z_j$ does not contribute to the polar locus of $Z_\varphi \left(f,D;  \boldsymbol{s}\right)$.
\end{proposition}

Also, the core of Proposition \ref{if} is more generally true for any hyperplane arrangement. That is, the same proof is valid till before the conclusion using Proposition \ref{independence}. 
Concretely: {\em if $\operatorname{dim}(Z_j \cap D) = \operatorname{dim} (D)$, then $E_j$ intersects $\tilde{D}$ in its interior or in one of its facets, and this intersection has dimension $N-1$}.  So the condition $\mathcal{L}_j$ is needed to assure convergence on some part of the embedded resolution space.  It could be possible, however, that all these conditions together are not independent.  

\begin{example}
The Mehta-Macdonald integrals, see Subsection \ref{Mehta-Macdonald Integrals} below, are specializations of (\ref{zeta_function_string_general}).
We can apply the above theorem to the following generalization of these integrals. We denote by $\mathcal{S}\left( \mathbb{R}^{N}\right) $ the
Schwartz space, which is the space of smooth functions from $\mathbb{R}^{N}$
 into $\mathbb{C}$, whose derivatives are rapidly decreasing. For $\beta \in 
\mathbb{C}$, with $\operatorname{Re}\left( \beta \right) >0$, and $\varphi $ a
Schwartz function\textit{, }we define
\begin{equation*}
Z_{\varphi }^{(N)}\left( \mathbb{R}^{N},\beta \right) =\int\limits_{\mathbb{R}^{N}}\varphi \left( x\right) \prod\limits_{i=1}^{d}\left\vert
L_{i}(x)\right\vert ^{2\beta }dx.
\end{equation*}
Then, by a well-known argument, the integral $Z_{\varphi }^{(N)}\left( 
\mathbb{R}^{N},\beta \right) $ defines a holomorphic function in the
half-plane $\operatorname{Re}\left( \beta \right) >0$. The integral  $Z_{\varphi }^{(N)}\left( \mathbb{R}^{N},\beta \right) $ has a meromorphic continuation to the whole $\mathbb{C}$, see  \cite{Igusa}, \cite{Igusa-old}; the description of its polar locus is a consequence of Theorem \ref{thm general hyperplanes}.
\end{example}

\section{\label{Section 8} Specific integrals}

The  Selberg-Mehta-Macdonald and Dotsenko-Fateev-like integrals appear in several areas of mathematics and physics; the literature about these integrals is vast, so our list of bibliographic references is far from complete. Below we review some of them as motivation for the integrals studied here.

\subsection{Mehta-Selberg Integrals}

The Selberg integral
\begin{eqnarray*}
S_{N}\left( \alpha ,\beta ,\gamma \right)  &:&=\int\nolimits_{0}^{1}\cdots
\int\nolimits_{0}^{1}\prod\limits_{i=1}^{N}t_{i}^{\alpha-1}\left(1-t_{i}\right) ^{\beta -1}\prod\limits_{1\leq i<j\leq N}\left\vert
t_{i}-t_{j}\right\vert ^{2\gamma }\prod\limits_{i=1}^{N}dt_{i} \\
&=&\prod\limits_{j=0}^{N-1}\frac{\Gamma \left( \alpha +j\gamma \right)
\Gamma \left( \beta +j\gamma \right) \Gamma \left( 1+\left( j+1\right)
\gamma \right) }{\Gamma \left( \alpha +\beta +\left( N+j-1\right) \gamma
\right) \Gamma \left( 1+\gamma \right) },
\end{eqnarray*}
for 
\begin{equation*}
\operatorname{Re}\left( \alpha \right) >0\text{, }\operatorname{Re}\left( \beta \right) >0%
\text{, }\operatorname{Re}(\gamma )>-\min \left\{ \frac{1}{N},\frac{\operatorname{Re}\left(
\alpha \right) }{N-1},\frac{\operatorname{Re}\left( \beta \right) }{N-1}\right\} ,
\end{equation*}
where $\Gamma \left( s\right) $ is the Gamma function; see \cite{Selberg}, \cite{Forrester et al}.

The Mehta integral 
\begin{eqnarray*}
F_{N}(\gamma ) &:&=\frac{1}{\left( 2\pi \right) ^{\frac{N}{2}}}
\int\nolimits_{-\infty }^{\infty }\cdots \int\nolimits_{-\infty }^{\infty
}\prod\limits_{i=1}^{N}e^{-\frac{t_{i}^{2}}{2}}\prod\limits_{1\leq i<j\leq
N}\left\vert t_{i}-t_{j}\right\vert ^{2\gamma }\prod\limits_{i=1}^{N}dt_{i}
\\
&=&\prod\limits_{j=1}^{n}\frac{\Gamma \left( 1+j\gamma \right) }{\Gamma
\left( 1+\gamma \right) },
\end{eqnarray*}
for $\operatorname{Re}(\gamma )>\frac{-1}{N}$. The integral $\left( 2\pi \right) ^{
\frac{N}{2}}F_{N}(\frac{\beta }{2})$ is the partition function associated with the probability measure
\begin{equation}
\frac{e^{-\beta H}}{\left( 2\pi \right) ^{\frac{N}{2}}F_{N}(\frac{\beta }{2})
}\prod\limits_{i=1}^{N}dt_{i},  \label{Coullomb_measure}
\end{equation}
where 
\begin{equation}
H=\frac{1}{2\beta }\sum\limits_{i=1}^{N}t_{i}^{2}-\sum\limits_{1\leq
i<j\leq N}\log \left\vert t_{i}-t_{j}\right\vert .  \label{Potential}
\end{equation}
In fact, (\ref{Coullomb_measure})-(\ref{Potential}) describe a gas of $n$
particles on the line, at inverse temperature $\beta $, interacting through the
repulsive Coulomb potential and confined by a harmonic well. The Selberg integral can be used to evaluate Mehta's integral; see \cite{Forrester et al},
\cite{Forrester}, \cite{Mehta}, and the references therein.

\subsection{Mehta-Macdonald Integrals}\label{Mehta-Macdonald Integrals}

The Macdonald conjectures involve a generalization of Mehta
integral, \cite{Macdonald}, \cite{Forrester et al}. In 1982, Macdonald
published  several conjectures generalizing the Mehta integral. Let $G$
be a finite group of isometries of $\mathbb{R}^{N}$, generated by
reflections in $d$ hyperplanes. Take the equations for the hyperplanes of
the form%
\begin{equation*}
L_{i}(x)=a_{i,1}x_{1}+\ldots +a_{i,N}x_{N},\text{ with }a_{i,1}^{2}+\ldots
+a_{i,N}^{2}=2,
\end{equation*}
for $i=1,\ldots ,d$, and set
\begin{equation*}
f(x):=\prod\limits_{i=1}^{d}L_{i}(x).
\end{equation*}
By its action on $\mathbb{R}^{N}$, the group $G$ acts on polynomials in $
x=(x_{1},...,x_{N})$. The polynomials that are invariant under the action of 
$G$ are referred to as $G$-invariant polynomials. They form an $\mathbb{R}$
-algebra $\mathbb{R}[g_{1},...,g_{N}]$ generated by d algebraically
independent polynomials of degrees $e_{1},\ldots ,e_{N}$. These 
polynomials are uniquely determined by the underlying reflection group. With
this notation, the Macdonald conjecture for a finite reflection group $G$
asserts
\begin{equation*}
\frac{1}{\left( 2\pi \right) ^{\frac{N}{2}}}\int\limits_{\mathbb{R}^{N}}e^{-\frac{\left\vert x\right\vert ^{2}}{2}}\left\vert f(x)\right\vert ^{2\beta
}dx=\prod\limits_{i=1}^{N}\frac{\Gamma \left( 1+e_{i}\beta \right) }{\Gamma
\left( 1+\beta \right) }.
\end{equation*}
For a further discussion the reader may consult \cite{Forrester et al}, and
the references therein.

\subsection{Dotsenko-Fateev integrals}

Another important generalization of Mehta integrals are Dotsenko-Fateev integrals  that appear in conformal quantum field theory, \cite{Dotsenko-Fateev},
\begin{eqnarray*}
&&\operatorname{PV}\int\limits_{\left[ 0,1\right] ^{p}}\text{ }\int\limits_{\left[
1,\infty \right) ^{N-p}}\text{ }\int\limits_{\left[ 0,1\right] ^{r}}\text{ }
\int\limits_{\left[ 1,\infty \right) ^{m-r}}\text{ }\prod
\limits_{i=1}^{N}t_{i}^{\alpha }\left( 1-t_{i}\right) ^{\beta
}\prod\limits_{i=1}^{m}\tau _{i}^{\alpha ^{\prime }}\left( \tau
_{i}-1\right) ^{\beta ^{\prime }}\times  \\
&&\prod\limits_{i=1}^{N}\prod\limits_{j=1}^{m}\left( \tau
_{j}-t_{i}\right) ^{-2}\prod\limits_{1\leq i<j\leq N}\left\vert
t_{i}-t_{j}\right\vert ^{2\gamma }\prod\limits_{1\leq i<j\leq m}\left\vert
\tau _{i}-\tau _{j}\right\vert ^{2\gamma ^{\prime
}}\prod\limits_{i=1}^{N}dt_{i}\prod\limits_{i=1}^{m}d\tau _{i},
\end{eqnarray*}
where PV denotes the principal value, 
\begin{equation*}
\frac{\alpha }{\alpha ^{\prime }}=\frac{\beta }{\beta ^{\prime }}=-\gamma 
\text{, }\gamma \gamma ^{\prime }=1\text{, }0\leq p\leq N\text{, }0\leq
r\leq m.
\end{equation*}
In the case $p=N$ and $m=0$, the Dotsenko-Fateev integral is, up to a shift
by $1$ in $\alpha $\ and $\beta $, precisely the Selberg integral. Dotsenko
and Fateev also consider a complex generalization of the Selberg integral,
which was also studied independently by Aomoto in \cite{Aomoto}, namely
\begin{equation*}
A_{N}(\alpha ,\beta ,\gamma )=\int\limits_{\mathbb{R}^{2}}\cdots
\int\limits_{\mathbb{R}^{2}}\text{ }\prod\limits_{i=1}^{N}\left\vert 
\boldsymbol{r}_{i}\right\vert ^{2\left( \alpha -1\right) }\left\vert 
\boldsymbol{u}-\boldsymbol{r}_{i}\right\vert ^{2\left( \beta -1\right)
}\prod\limits_{1\leq i<j\leq N}\left\vert \boldsymbol{r}_{i}-\boldsymbol{r}
_{j}\right\vert ^{4\gamma }\prod\limits_{i=1}^{N}d\boldsymbol{r}_{i},
\end{equation*} where $\boldsymbol{u}\in \mathbb{R}^{2}$ is an arbitrary unit vector. Dotsenko and Fateev, as well as  Aomoto showed that 
\begin{equation*}
A_{N}(\alpha ,\beta ,\gamma )=\frac{S_{N}^{2}\left( \alpha ,\beta ,\gamma
\right) }{N!}\prod\limits_{j=0}^{N}\frac{\sin \pi \left( \alpha +j\gamma
\right) \sin \pi \left( \beta +j\gamma \right) \sin \pi \left( 1+j\right)
\gamma }{\sin \pi \left( \alpha +\beta +\left( N+j-1\right) \gamma \right)
\sin \pi \gamma },
\end{equation*}%
for 
\begin{equation*}
\operatorname{Re}\left( \alpha +\beta +\left( N-1\right) \gamma \right) <1\text{ and 
}\operatorname{Re}(\alpha +\beta +2\left( N-1\right) \gamma )<1.
\end{equation*}

\subsection{Local zeta functions for graphs and Log-Coulomb Gases}

In \cite{Zuniga-JMP-2022}, a generalization of the Mehta integral of the form%
\begin{equation*}
Z_{\varphi }(\boldsymbol{s})=\int\limits_{-\infty }^{\infty }\cdots
\int\limits_{-\infty }^{\infty }\varphi \left( x_{1},\ldots ,x_{N}\right)
\prod\limits_{1\leq i<j\leq N}\left\vert x_{i}-x_{j}\right\vert
^{s_{ij}}\prod\limits_{i=1}^{N}dx_{i},
\end{equation*}
where $\varphi $ is a Schwartz function, and $\boldsymbol{s=}\left(
s_{ij}\right) _{1\leq i<j\leq N}\in \mathbb{C}^{\frac{N(N-1)}{2}}$ with $%
\operatorname{Re}\left( s_{ij}\right) >0$ for any $1\leq i<j\leq N$, were studied.
The original Mehta integral $F_{N}(\gamma )$ is exactly $F_{N}(\gamma )=%
\frac{1}{\left( 2\pi \right) ^{\frac{N}{2}}}Z_{\varphi }(\boldsymbol{s})\mid
_{s_{ij}=2\gamma }$, with $\varphi \left( x_{1},\ldots ,x_{N}\right) =$ $e^{-%
\frac{1}{2}\sum_{i=1}^{N}x_{i}^{2}}$. The integral $Z_{\varphi }(\boldsymbol{%
s})$ is a particular case of a multivariate local zeta function. These
functions admit meromorphic continuations to the whole $\mathbb{C}^{\frac{%
N(N-1)}{2}}$, see e.g. \cite{Loeser}. Today, there exists a uniform
theory of local zeta functions in local fields of characteristic zero,
e.g. $\left( \mathbb{R},\left\vert \cdot \right\vert \right) $, $\left( 
\mathbb{C},\left\vert \cdot \right\vert \right) $, and the field of $p$-adic
numbers $\left( \mathbb{Q}_{p},\left\vert \cdot \right\vert _{p}\right) $,
see \cite{Igusa-old}, \cite{Igusa}, see also \cite{Denef}, \cite{G-S}, \cite%
{Loeser}, \cite{Veys-Zuniga-Advances}, \cite{Zuniga-Veys2} and the references therein. 

Given a local field $(\mathbb{K},\left\vert \cdot \right\vert _{\mathbb{K}})$%
, for instance $\mathbb{R}$, $\mathbb{C}$, $\mathbb{Q}_{p}$, and a finite,
simple graph $G$, we attach to them a $1D$ log-Coulomb gas and a local zeta
function. By a gas configuration\ we\ mean a triple $\left( \boldsymbol{x},%
\boldsymbol{e},G\right) $, with $\boldsymbol{x}=\left( x_{v}\right) _{v\in
V(G)}$, $\boldsymbol{e}=\left( e_{v}\right) _{v\in V(G)}$, where $e_{v}\in 
\mathbb{R}$ is a charge located at the site $x_{v}\in \mathbb{K}$, and the interaction between the charges is determined by the graph $G$. Given a
vertex $u$ of $G$ ($u\in V(G)$), the charged particle at the site $x_{u}$ can interact only with the particles located at the sites $x_{v}$ for which there exists an edge between $u$ and $v$ (we denote this fact as $u\sim v$).
The Hamiltonian is given by 
\begin{equation}
H_{\mathbb{K}}(\boldsymbol{x};\mathbf{e},\beta ,\Phi ,G)=-\sum\limits
_{\substack{ u,v\in V\left( G\right)  \\ u\sim v}}\ln \left\vert
x_{u}-x_{v}\right\vert _{\mathbb{K}}^{e_{u}e_{v}}+\frac{1}{\beta }P(%
\boldsymbol{x}),  \label{Hamiltonian}
\end{equation}%
where $\beta =\frac{1}{k_{B}T}$ (with $k_{B}$ the Boltzmann constant, $T$
the absolute temperature), $P:\mathbb{K}^{\left\vert V(G)\right\vert }%
\mathbb{\rightarrow R}$ is a confining potential such that $\Phi \left( 
\boldsymbol{x}\right) =e^{-P(\boldsymbol{x})}$ is a test function, which
means that $P=+\infty $ outside of a compact subset.

The partition function attached to the Hamiltonian (\ref{Hamiltonian}) is
given by%
\begin{equation}
\mathcal{Z}_{G,\mathbb{K},\Phi,\mathbf{e}}\left( \beta\right) =\int
\limits_{\mathbb{K}^{\left\vert V(G)\right\vert }}\text{ }\Phi\left( 
\boldsymbol{x}\right) \prod \limits_{_{\substack{ u,v\in V\left( G\right) 
\\ u\sim v}}}\ \left\vert x_{u}-x_{v}\right\vert _{\mathbb{K}%
}^{e_{u}e_{v}\beta}\prod \limits_{v\in V\left( G\right) }dx_{v}.
\label{Partition_function}
\end{equation}
In order to study this integral, using geometric techniques, it is convenient to extend $e_{u}e_{v}\beta$ to a complex variable $s\left(
u,v\right) $, in this way the partition function (\ref{Partition_function})
becomes a local zeta function. Then the partition function is recovered from
the local zeta function taking $s\left( u,v\right) =e_{u}e_{v}\beta$.

The local zeta function attached to $G$, $\Phi $ is defined as 
\begin{equation*}
Z_{\Phi }(\boldsymbol{s};G,\mathbb{K})=\int\limits_{\mathbb{K}^{\left\vert
V(G)\right\vert }}\text{ }\Phi \left( \boldsymbol{x}\right) \prod\limits 
_{\substack{ u,v\in V(G)  \\ u\sim v}}\left\vert x_{u}-x_{v}\right\vert _{%
\mathbb{K}}^{s\left( u,v\right) }\prod\limits_{v\in V(G)}dx_{v}\text{,}
\end{equation*}%
where $\boldsymbol{s}=\left( s\left( u,v\right) \right) $ for $u,v\in V(G)$
for $u\sim v$, $s\left( u,v\right) $ is a complex variable attached to the edge connecting the vertices $u$ and $v$, and $\prod\nolimits_{v\in V(G)}dx_{v}$ is a Haar measure of the locally compact group $(\mathbb{K}%
^{\left\vert V(G)\right\vert },+)$. The integral converges for $\operatorname{Re}
(s\left( u,v\right) )>0$ for any $\left( u,v\right) $.\ The partition
function $\mathcal{Z}_{G,\mathbb{K},\Phi ,\mathbf{e}}\left( \beta \right) $
of $H_{\mathbb{K}}(\boldsymbol{x};\mathbf{e},\beta ,\Phi ,G)$ is related to the local zeta function of the graph by%
\begin{equation*}
\mathcal{Z}_{G,\mathbb{K},\Phi ,\mathbf{e}}\left( \beta \right) =\left.
Z_{\Phi }(\boldsymbol{s};G,\mathbb{K})\right\vert _{s\left( u,v\right)
=e_{u}e_{v}\beta }.
\end{equation*}%
The zeta function $Z_{\Phi }(\boldsymbol{s};G)$ admits a meromorphic
continuation to the whole complex space $\mathbb{C}^{\left\vert
E(G)\right\vert }$, see \cite[Th\'{e}or\`{e}me 1.1.4]{Loeser}.

\bigskip
\noindent
{\sc Acknowledgement.}
We want to thank the referees for carefully reading the manuscript and for various suggestions improving the presentation of the paper.

\bigskip

\end{document}